\begin{document}

\input epsf

\def\thesection {\arabic{section}}

\newtheorem{corollary}{Corollary}
\newtheorem{theorem}{Theorem}
\newtheorem{lemma}{Lemma}
\newtheorem{remark}{Remark}
\newtheorem{example}{Example}
\newtheorem{proposition}{Proposition}
\newtheorem{question}{Question}
\newtheorem{conjecture}{Conjecture}

\newcommand{\expectation}{\ensuremath{\mathbb{E}}}
\newcommand{\reals}{\ensuremath{\mathbb{R}}}

\title{\huge{On the Corner Points of the Capacity Region of a Two-User Gaussian
Interference Channel}}

\markboth{Revised in April 5, 2015}{On the Corner Points of
the Capacity Region of a Gaussian Interference Channel}

\author{Igal Sason\thanks{The author is with the Department of Electrical Engineering,
Technion, Haifa 32000, Israel. E-mail: {\tt sason@ee.technion.ac.il}. The paper was submitted
to the {\em IEEE Trans. on Information Theory} in July~18, 2014, and revised in
April~5, 2015. It was presented in part at the {\em 2014 IEEE International Symposium
on Information Theory (ISIT~2014)}, Honolulu, Hawaii, USA, July 2014.
This research work has been supported by the Israeli Science Foundation (ISF), grant number~12/12.}}

\maketitle

\vspace*{-0.9cm}
\begin{abstract}
This work considers the corner points of the capacity region of a two-user Gaussian interference
channel (GIC). In a two-user GIC, the rate pairs where one user transmits its data at the
single-user capacity (without interference), and the other at the largest rate for which
reliable communication is still possible are called corner points. This paper relies on existing outer
bounds on the capacity region of a two-user GIC that are used to derive informative bounds on the corner points
of the capacity region. The new bounds refer to a weak two-user GIC (i.e., when both cross-link
gains in standard form are positive and below~1), and a refinement of these bounds is obtained for the case
where the transmission rate of one user is within $\varepsilon > 0$ of the single-user capacity. The bounds
on the corner points are asymptotically tight as the transmitted powers tend to infinity, and they are
also useful for the case of moderate SNR and INR.
Upper and lower bounds on the gap (denoted by $\Delta$) between the sum-rate and the maximal achievable
total rate at the two corner points are derived. This is followed by an asymptotic analysis analogous
to the study of the generalized degrees of freedom (where the SNR and INR scalings are coupled such that
$\frac{\log(\text{INR})}{\log(\text{SNR})} = \alpha \geq 0$), leading to an asymptotic
characterization of this gap which is exact for the whole range of $\alpha$. The upper and lower bounds
on $\Delta$ are asymptotically tight in the sense that they achieve the exact asymptotic characterization.
Improved bounds on $\Delta$ are derived for finite SNR and INR, and their improved tightness is exemplified numerically.
\end{abstract}

{\bf{Keywords}}:
Corner point, Gaussian interference channel,
inner/ outer bound, interference to noise ratio, sum-rate.

\vspace*{-0.3cm}
\section{Introduction}
\label{section: Introduction}
The two-user Gaussian interference channel (GIC) has been extensively studied
in the literature during the last four decades (see, e.g.,
\cite[Chapter~6]{ElGamalKim_book}, \cite{FnT13} and references therein). For completeness and
to set notation, the model of a two-user GIC in  {\em standard form} is introduced
shortly: this discrete-time, memoryless interference channel
is characterized by the following equations that relate the paired inputs
$(X_1, X_2)$ and outputs $(Y_1, Y_2)$:
\begin{align}
Y_1 = X_1 + \sqrt{a_{12}} \, X_2 + Z_1 \label{eq:Gaussian-IC1} \\
Y_2 = \sqrt{a_{21}} \, X_1 + X_2 + Z_2 \label{eq:Gaussian-IC2}
\end{align}
where the cross-link gains $a_{12}$ and $a_{21}$ are time-invariant,
the inputs and outputs are real valued, and $Z_1$ and $Z_2$ denote
additive real-valued Gaussian noise samples. Let
$X_1^n \triangleq (X_{1,1}, \ldots, X_{1,n})$ and
$X_2^n \triangleq (X_{2,1}, \ldots, X_{2,n})$ be two transmitted
codewords across the channel where $X_{i,j}$ denotes the symbol that is
transmitted by user $i$ at time instant $j$ (here, $i \in \{1, 2\}$ and
$j \in \{1, \ldots, n\}$). The model assumes that there is no cooperation
between the transmitters, not between the receivers. It is assumed, however,
that the receivers have full
knowledge of the codebooks used by both users. The power constraints on the inputs
are given by $\frac{1}{n} \sum_{j=1}^n \expectation[X_{1,j}^2] \leq P_1$
and $\frac{1}{n} \sum_{j=1}^n \expectation[X_{2,j}^2] \leq P_2$ where
$P_1, P_2 > 0$. The random vectors $Z_1^n$ and $Z_2^n$
have i.i.d. Gaussian entries with zero mean and unit variance, and
they are independent of the inputs $X_1^n$ and $X_2^n$. Furthermore, $Z_1^n$
and $Z_2^n$ can be assumed to be independent since the capacity region of a
two-user, discrete-time, memoryless interference channel depends only on the
marginal {\em pdf}s $p(y_i | x_1, x_2)$ for $i \in \{1, 2\}$ (as the receivers
do not cooperate). Finally,
perfect synchronization between the pairs of transmitters and receivers
allows time-sharing between the users, which implies that the capacity region is convex.

Depending on the values of $a_{12}$ and $a_{21}$, the two-user GIC is classified
into weak, strong, mixed, one-sided and degraded GIC. If $0 < a_{12}, a_{21} < 1$,
the channel is called a {\em weak} GIC. If $a_{12} \geq 1$ and $a_{21} \geq 1$, the
channel is a {\em strong} GIC; furthermore, if $a_{12} \geq 1+P_1$ and $a_{21} \geq 1+P_2$
then the channel is a {\em very strong} GIC, and its capacity region is not harmed
(i.e., reduced) as a result of the interference \cite{Carleial75}.
If either $a_{12} \geq 1$ and $0 < a_{21} < 1$ or $a_{21} \geq 1$ and $0 < a_{12} < 1$,
the channel is called a {\em mixed} GIC; the special case where $a_{12} a_{21} = 1$ is
called a {\em degraded} GIC. It is a {\em one-sided} GIC if either $a_{12}=0$ or $a_{21}=0$;
a one-sided GIC is either weak or strong if its non-zero cross-link gain is below or
above~1, respectively.
Finally, a {\em symmetric} GIC refers to the case where $a_{12} = a_{21}$ and $P_1=P_2$.

In spite of the simplicity of the model of a two-user GIC, the exact characterization of
its capacity region is yet unknown, except for strong
(\cite{Han81}, \cite{Sato81}) or very strong interference \cite{Carleial75}.
For other GICs, not only the capacity region is yet unknown but even its corner
points are not fully determined. For
mixed or one-sided GICs, a single corner point of the capacity region is known
and it attains the sum-rate of this channel (see \cite[Section~6.A]{Motahari09},
\cite[Theorem~2]{Sason04}, and \cite[Section~2.C]{Kramer09}). For weak GICs,
both corner points of the capacity region are yet unknown.

The operational meaning of the study of the corner points of the capacity
region for a two-user GIC is to explore the situation where one
transmitter sends its information at the maximal achievable rate for a
single user (in the absence of interference), and the second transmitter
maintains a data rate that enables reliable communication to the two
non-cooperating receivers \cite{Costa85}.
Two questions occur in this scenario:
\begin{question}
{\em What is the maximal achievable rate of the second transmitter ?}
\label{question 1: corner points}
\end{question}
\begin{question}
{\em Does it enable the first receiver to reliably decode the messages of both transmitters ?}
\label{question 2: corner points}
\end{question}

In his paper \cite{Costa85}, Costa presented an approach suggesting that when one of the
transmitters, say transmitter~1, sends its data over a two-user GIC at the maximal interference-free
rate $R_1 = \frac{1}{2} \log(1+P_1)$ bits per channel use, then the maximal rate $R_2$ of
transmitter~2 is the rate that enables receiver~1 to decode both messages.
The corner points of the capacity region are therefore related to a multiple-access channel
where one of the receivers decodes correctly both messages. However, \cite[pp.~1354--1355]{Sason04}
pointed out a gap in the proof of \cite[Theorem~1]{Costa85}, though it was conjectured that the
main result holds. It therefore leads to the following conjecture:
\begin{conjecture}
{\em For rate pairs $(R_1, R_2)$ in the capacity region of a two-user GIC with arbitrary
positive cross-link gains $a_{12}$ and $a_{21}$, and power constraints $P_1$ and $P_2$, let
\begin{equation}
C_1 \triangleq \frac{1}{2} \, \log(1+P_1), \quad
C_2 \triangleq \frac{1}{2} \, \log(1+P_2)
\label{eq:capacity of AWGNs}
\end{equation}
be the capacities of the single-user AWGN channels (in the absence of interference), and let
\begin{align}
& R_1^* \triangleq \frac{1}{2} \, \log \left(1 + \frac{a_{21} P_1}{1+P_2} \right)
\label{eq:maximal R1 in Costa's conjecture} \\
& R_2^* \triangleq \frac{1}{2} \, \log \left(1 + \frac{a_{12} P_2}{1+P_1} \right).
\label{eq:maximal R2 in Costa's conjecture}
\end{align}
Then, the following is conjectured to hold for achieving reliable communication at both receivers:
\begin{enumerate}
\item If $R_2 \geq C_2 - \varepsilon$, for an arbitrary $\varepsilon > 0$, then
$ R_1 \leq R_1^* + \delta_1(\varepsilon)$
where $\delta_1(\varepsilon) \rightarrow 0$ as $\varepsilon \rightarrow 0$.
\item If $R_1 \geq C_1 - \varepsilon$, then $ R_2 \leq R_2^* + \delta_2(\varepsilon)$
where $\delta_2(\varepsilon) \rightarrow 0$ as $\varepsilon \rightarrow 0$.
\end{enumerate}}
\label{conjecture: Costa's conjecture for the corner points}
\end{conjecture}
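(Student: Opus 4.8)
The plan is to prove part~2; part~1 then follows by the symmetry of the model under the substitution $(X_1,Y_1,P_1,a_{12})\leftrightarrow(X_2,Y_2,P_2,a_{21})$. Fix a blocklength-$n$ code achieving a rate pair $(R_1,R_2)$ in the capacity region with $R_1\ge C_1-\varepsilon$; Fano's inequality gives $nR_1\le I(X_1^n;Y_1^n)+n\epsilon_n$ and $nR_2\le I(X_2^n;Y_2^n)+n\epsilon_n$ with $\epsilon_n\to0$. I would first extract two quantitative consequences of near-interference-free decoding at receiver~1. From $I(X_1^n;Y_1^n)=h(Y_1^n)-h(\sqrt{a_{12}}\,X_2^n+Z_1^n)$ and the maximum-entropy bound $h(Y_1^n)\le\frac n2\log\!\bigl(2\pi e(1+P_1+a_{12}P_2)\bigr)$ (valid because $X_1^n$ and $X_2^n$ are independent in an interference channel),
\begin{equation}
\tfrac1n\,h\bigl(\sqrt{a_{12}}\,X_2^n+Z_1^n\bigr)\;\le\;R_2^*+\tfrac12\log(2\pi e)+\varepsilon+\epsilon_n .
\label{eq:corner-conj-noise}
\end{equation}
Also, since $X_1^n-(X_1^n+Z_1^n)-Y_1^n$ is a Markov chain, data processing gives $I(X_1^n;X_1^n+Z_1^n)\ge I(X_1^n;Y_1^n)\ge n(C_1-\varepsilon)-n\epsilon_n$, so $h(X_1^n+Z_1^n)$ lies within $O\!\bigl(n(\varepsilon+\epsilon_n)\bigr)$ of its maximum $\frac n2\log\!\bigl(2\pi e(1+P_1)\bigr)$: the codeword of user~1, as corrupted by its own noise, is almost Gaussian.

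Next I would settle the case $a_{12}\ge1$ outright. Handing receiver~2 the genie $W_1$ and using Fano gives $nR_2\le I(X_2^n;Y_2^n\mid X_1^n)+n\epsilon_n=I(X_2^n;X_2^n+Z_2^n)+n\epsilon_n$. For $a_{12}\ge1$ the observation $X_2^n+Z_2^n$ is a stochastically degraded version of $\sqrt{a_{12}}\,X_2^n+Z_1^n$ (rescale by $a_{12}^{-1/2}$ and add independent $\mathcal N(0,1-a_{12}^{-1})$ noise), hence $I(X_2^n;X_2^n+Z_2^n)\le I(X_2^n;\sqrt{a_{12}}\,X_2^n+Z_1^n)=h(\sqrt{a_{12}}\,X_2^n+Z_1^n)-\frac n2\log(2\pi e)$, which by \eqref{eq:corner-conj-noise} is at most $nR_2^*+n(\varepsilon+\epsilon_n)$. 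Taking $n\to\infty$ proves part~2 with $\delta_2(\varepsilon)=\varepsilon$, and symmetrically part~1 holds whenever $a_{21}\ge1$. The entire residual content of the conjecture is therefore the weak-interference case: part~2 with $0<a_{12}<1$ (and part~1 with $0<a_{21}<1$).

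For this case the plan is to convert ``$X_1^n+Z_1^n$ almost Gaussian'' into a statement about the channel of user~2. A stability form of the entropy-power inequality — equivalently, Wasserstein-continuity of differential entropy — should promote the near-maximality of $h(X_1^n+Z_1^n)$ to closeness of $X_1^n+Z_1^n$, and hence of $X_1^n$, to the i.i.d.\ law $\mathcal N(0,P_1)$ in normalised quadratic Wasserstein distance, with a gap governed by $\varepsilon+\epsilon_n$. Inserting this into $Y_2^n=\sqrt{a_{21}}\,X_1^n+X_2^n+Z_2^n$ and bounding the induced entropy perturbation, the Fano bound $nR_2\le I(X_2^n;Y_2^n)+n\epsilon_n$ reduces, up to a vanishing error, to controlling $I\!\bigl(X_2^n;\sqrt{a_{21}}\,G^n+X_2^n+Z_2^n\bigr)$ with $G^n$ i.i.d.\ standard Gaussian. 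One is then left to show that, subject to the power constraint on $X_2^n$, to \eqref{eq:corner-conj-noise}, \emph{and} to the fact that receiver~1 decodes $X_1^n$ at rate $\approx C_1$, this quantity cannot exceed $nR_2^*$ by more than $n\,\delta_2(\varepsilon)$ — which is exactly the claim that operating near $C_1$ at receiver~1 pins $(R_1,R_2)$ to the corner $(C_1,R_2^*)$ of the multiple-access region realised at receiver~1.

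The hard part is precisely this last implication, and it is why I would not expect the plan to close in full generality. Constraint \eqref{eq:corner-conj-noise} alone is insufficient: an i.i.d.\ Gaussian $X_2^n$ of power $P_2/(1+P_1)$ obeys it yet does not permit $R_1$ to reach $C_1-\varepsilon$, so the decodability of $X_1^n$ at receiver~1 must enter essentially — a finite-blocklength corner-point converse for the Gaussian multiple-access channel that has to hold for arbitrary, possibly markedly non-Gaussian, codebooks $X_2^n$. Making the Wasserstein/entropy perturbations uniform in $n$ and distilling the multiple-access corner-point picture into a clean single-letter inequality valid for all SNR and INR is the step I expect to stall on; it is, I suspect, the reason the statement is still a conjecture rather than a theorem.
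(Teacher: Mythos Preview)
Your self-assessment is accurate, and it is worth stressing at the outset that the paper does \emph{not} prove this statement: it is explicitly labelled Conjecture~1 and remains open. So there is no ``paper's own proof'' to compare against.

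Your treatment of the case $a_{12}\ge 1$ is correct. The genie/degradation argument you give is essentially the standard one, and the conclusion $R_2\le R_2^*+\varepsilon$ matches (and in some parameter ranges is weaker than) the paper's Proposition~1 for mixed GICs and the known strong-interference results. You are also right that the entire residual content of the conjecture is the weak-interference case $0<a_{12}<1$.

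For that case, you correctly identify the obstruction: constraint~\eqref{eq:corner-conj-noise} controls $h(\sqrt{a_{12}}\,X_2^n+Z_1^n)$ but does not by itself force $X_2^n$ to look Gaussian with the ``right'' power, and your stability-of-EPI / Wasserstein program would need a uniform-in-$n$ corner-point converse for the Gaussian MAC at receiver~1, valid for arbitrary non-Gaussian $X_2^n$. This is exactly the gap in Costa's original argument that was pointed out in \cite{Sason04} and is the reason the statement is a conjecture.

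What the paper does instead is sidestep a direct proof and use existing outer bounds (the ETW bound of \cite{EtkinTseWang08} and Kramer's bound \cite{Kramer04}) to obtain Theorem~1: if $R_1\ge C_1-\varepsilon$ then
\[
R_2 \;\le\; R_2^* \;+\; \tfrac12\log\!\Bigl(1+\tfrac{P_2}{(1+a_{21}P_1)(1+a_{12}P_2)}\Bigr) \;+\; 2\varepsilon,
\]
so the conjectured bound holds up to an additive term that vanishes only as $P_1,P_2\to\infty$. The paper therefore establishes the conjecture asymptotically (interference-limited regime) but not for finite SNR/INR. Your honest conclusion that the plan ``would not close in full generality'' is exactly the state of affairs.
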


The discussion on Conjecture~\ref{conjecture: Costa's conjecture for the corner points}
is separated in the continuation to this section into mixed,
strong, and weak one-sided GICs. This is done by restating some known results
from \cite{Costa85}, \cite{Han81}, \cite{Motahari09}, \cite{Sason04},
\cite{Sato78}, \cite{Sato81} and \cite{Kramer09}. The focus of this paper is
on weak GICs. For this class,
the corner points of the capacity region are yet unknown, and they are
studied in the converse part of this paper by relying
on some existing outer bounds on the capacity region.
Various outer bounds
on the capacity region of GICs that have been introduced in the literature
(see, e.g., \cite{Annapureddy_Veeravalli09}, \cite{Carleial78}, \cite{ElGamalKim_book},
\cite{EtkinTseWang08}, \cite{Kramer04}, \cite{Motahari09}, \cite{Nam_Caire_ISIT2012},
\cite{Sato78} and \cite{Kramer09}--\cite{Tuninetti_ISIT2011}).
The analysis in this paper
provides informative bounds that are given in closed form, and they are
asymptotically tight for sufficiently large SNR and INR. Improvements of these
bounds are derived for finite SNR and INR, and these improvements
are exemplified numerically.

\subsection{On Conjecture 1 for Mixed GICs}
\label{subsection: On Conjecture 1 for the Gaussian IC with mixed interference}
Conjecture~\ref{conjecture: Costa's conjecture for the corner points}
is considered in the following for mixed GICs:
\begin{proposition}
{\em Consider a mixed GIC where $a_{12} \geq 1$ and
$a_{21} < 1$, and assume that transmitter~1 sends its message at rate
$R_1 \geq C_1 - \varepsilon$ for an arbitrary $\varepsilon > 0$.
Then, the following holds:
\begin{enumerate}
\item If $1-a_{12} < (a_{12} a_{21}-1) P_1$, then
$R_2 \leq \frac{1}{2} \, \log\left(1 + \frac{P_2}{1+a_{21} P_1}\right) + \varepsilon.$
This implies that the maximal rate $R_2$ is {\em strictly smaller} than the corresponding upper bound
in Conjecture~\ref{conjecture: Costa's conjecture for the corner points}.
\item Otherwise, if $1-a_{12} \geq (a_{12} a_{21}-1) P_1$, then
$R_2 \leq R_2^* + \varepsilon$.
This coincides with the upper bound in Conjecture~\ref{conjecture: Costa's conjecture for the corner points}.
\end{enumerate}
The above two items refer to a corner point that achieves the sum-rate. On the other hand,
if $R_2 \geq C_2 - \varepsilon$, then
\begin{align}
R_1 \leq \frac{1}{2} \log\left(1 + \frac{P_1}{1+P_2}\right) + \delta(\varepsilon)
\label{eq:maximal R1 for a two-user mixed Gaussian IC}
\end{align}
where $\delta(\varepsilon) \rightarrow 0$ as $\varepsilon \rightarrow 0$.}
\label{proposition: mixed interference}
\end{proposition}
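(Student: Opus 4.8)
The plan is to handle the two corner points separately, combining Fano's inequality with genie‑aided bounds and the entropy power inequality (EPI), and exploiting the fact that only the marginals $p(y_i\mid x_1,x_2)$ matter, so the joint law of $(Z_1^n,Z_2^n)$ may be chosen conveniently. For a length‑$n$ code of vanishing error probability, Fano gives $nR_i\le I(X_i^n;Y_i^n)+n\epsilon_n$ with $\epsilon_n\to0$, and the power constraints give $h(Y_i^n)\le\frac n2\log\bigl(2\pi e(1+P_i+a_{ij}P_j)\bigr)$; write $N(\cdot)$ for the per‑dimension entropy power.

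\emph{Items 1 and 2 ($R_1\ge C_1-\varepsilon$, bounding $R_2$).} I would prove two upper bounds on $R_2$, each valid for every mixed GIC with $a_{12}\ge1>a_{21}$, and then observe that the stated hypothesis merely selects the smaller one. For the bound $R_2\le R_2^*+\varepsilon$: since $a_{12}\ge1$, choose $(Z_1^n,Z_2^n)$ with $\sqrt{a_{12}}\,Z_2^n=Z_1^n+U^n$, $U^n\sim\mathcal N(0,(a_{12}-1)I)$ independent of everything else, which keeps both marginals unit‑variance; then, conditioned on $X_1^n$, $Y_2^n$ is a deterministic function of $Y_1^n$ and the independent $U^n$, so $X_2^n\to Y_1^n\to Y_2^n$ given $X_1^n$ and hence $I(X_2^n;Y_2^n\mid X_1^n)\le I(X_2^n;Y_1^n\mid X_1^n)$. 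Combining with $nR_1\le I(X_1^n;Y_1^n)+n\epsilon_n$ and $nR_2\le I(X_2^n;Y_2^n)+n\epsilon_n\le I(X_2^n;Y_2^n\mid X_1^n)+n\epsilon_n$ and the chain rule gives $n(R_1+R_2)\le I(X_1^n,X_2^n;Y_1^n)+2n\epsilon_n\le\frac n2\log(1+P_1+a_{12}P_2)+2n\epsilon_n$; subtracting $R_1\ge C_1-\varepsilon$ and letting $n\to\infty$ yields $R_2\le R_2^*+\varepsilon$.

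For the bound $R_2\le\frac12\log\bigl(1+\frac{P_2}{1+a_{21}P_1}\bigr)+\varepsilon$: from $nR_1\le I(X_1^n;Y_1^n\mid X_2^n)+n\epsilon_n$ (a genie supplying $X_2^n$ to receiver~1 only increases the right side) $=h(X_1^n+Z_1^n)-\frac n2\log(2\pi e)+n\epsilon_n$ and $R_1\ge C_1-\varepsilon$, one gets $N(X_1^n+Z_1^n)\ge(1+P_1)e^{-2(\varepsilon+\epsilon_n)}$, i.e.\ $X_1^n+Z_1^n$ is nearly Gaussian. Since $a_{21}<1$, now choose $Z_2^n=\sqrt{a_{21}}\,Z_1^n+\sqrt{1-a_{21}}\,\widetilde Z^n$ with $\widetilde Z^n$ standard and independent, so $\sqrt{a_{21}}X_1^n+Z_2^n=\sqrt{a_{21}}\,(X_1^n+Z_1^n)+\sqrt{1-a_{21}}\,\widetilde Z^n$; by EPI, $N(\sqrt{a_{21}}X_1^n+Z_2^n)\ge a_{21}(1+P_1)e^{-2(\varepsilon+\epsilon_n)}+(1-a_{21})$. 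Substituting the resulting lower bound on $h(\sqrt{a_{21}}X_1^n+Z_2^n)$ into $nR_2\le h(Y_2^n)-h(\sqrt{a_{21}}X_1^n+Z_2^n)+n\epsilon_n$ and letting $n\to\infty$ gives $R_2\le\frac12\log\frac{1+P_2+a_{21}P_1}{a_{21}(1+P_1)e^{-2\varepsilon}+1-a_{21}}$, and since $e^{2\varepsilon}\ge1$ forces this to be at most $\frac12\log(1+\frac{P_2}{1+a_{21}P_1})+\varepsilon$, this is the bound in item~1. Finally, an elementary manipulation shows that the hypothesis $1-a_{12}<(a_{12}a_{21}-1)P_1$ is exactly equivalent to $\frac12\log(1+\frac{P_2}{1+a_{21}P_1})<R_2^*$, so in Case~1 the second bound is the binding one — and a matching successive‑decoding scheme (receiver~1 decodes and cancels $W_2$, receiver~2 treats $X_1$ as noise) makes it tight — whereas in Case~2 the first bound binds, recovering the value conjectured in Conjecture~\ref{conjecture: Costa's conjecture for the corner points}.

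\emph{Item 3 ($R_2\ge C_2-\varepsilon$, bounding $R_1$).} Here I would first turn the near‑capacity condition on $R_2$ into control of $X_1^n$: from $nR_2\le I(X_2^n;Y_2^n)+n\epsilon_n\le h(Y_2^n)-h(\sqrt{a_{21}}X_1^n+Z_2^n)+n\epsilon_n$ and $R_2\ge C_2-\varepsilon$ one gets $h(\sqrt{a_{21}}X_1^n+Z_2^n)\le\frac n2\log\bigl(2\pi e\,\frac{1+P_2+a_{21}P_1}{1+P_2}\bigr)+n(\varepsilon+\epsilon_n)$, whence, by EPI, $N(X_1^n)\le\frac1{a_{21}}\bigl(\frac{1+P_2+a_{21}P_1}{1+P_2}e^{2(\varepsilon+\epsilon_n)}-1\bigr)\to\frac{P_1}{1+P_2}$. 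The remaining step — converting this smallness of $X_1^n$, via $nR_1\le I(X_1^n;Y_1^n)+n\epsilon_n$, into $R_1\le\frac12\log(1+\frac{P_1}{1+P_2})+\delta(\varepsilon)$ — is where I expect the main obstacle to lie: because $a_{12}\ge1$, receiver~1's interference‑plus‑noise $\sqrt{a_{12}}X_2^n+Z_1^n$ has effective noise level below~$1$, so the near‑Gaussianity of $X_2^n+Z_2^n$ forced by $R_2\approx C_2$ does not transfer to it and $h(\sqrt{a_{12}}X_2^n+Z_1^n)$ cannot be bounded from below directly. I would resolve this by furnishing receiver~1 with a genie consisting of an appropriately noisy copy of the interferer's signal, its noise level tuned so that receiver~1's effective cross‑gain drops to unity — at which point $X_2^n+Z_1^n$ and $X_2^n+Z_2^n$ are equidistributed and the level‑one estimate applies — while the genie combined with $Y_1^n$ still does not reveal $X_2^n$ (hence $X_1^n$) to receiver~1; if this bookkeeping proves too delicate, one can instead bound $R_1$ through the existing outer bounds on the capacity region of mixed GICs. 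In either case the resulting bound on $R_1$ need not be tight, which is consistent with this corner point of a mixed GIC not being known exactly.
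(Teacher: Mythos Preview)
Your treatment of items~1 and~2 is correct and complete: the two sum-rate bounds you derive (the Sato-type bound via the degradedness $a_{12}\ge1$, and the EPI bound via the decomposition of $Z_2^n$ exploiting $a_{21}<1$) are exactly the content of the results the paper cites (\cite[Theorem~10]{Motahari09}, \cite[Theorem~1]{Kramer04}), so you have essentially reproved those theorems from scratch rather than invoking them. The equivalence you state between $1-a_{12}<(a_{12}a_{21}-1)P_1$ and $\tfrac12\log(1+\tfrac{P_2}{1+a_{21}P_1})<R_2^*$ is correct, and so is the claim that $\tfrac12\log\frac{1+P_2+a_{21}P_1}{a_{21}(1+P_1)e^{-2\varepsilon}+1-a_{21}}\le \tfrac12\log(1+\tfrac{P_2}{1+a_{21}P_1})+\varepsilon$.

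For item~3 there is a genuine gap. Your upper bound on $N(X_1^n)$ is correct, but it does not yield the stated bound on $R_1$: controlling $h(X_1^n)$ does not by itself control $h(X_1^n+Z_1^n)$ from above (EPI goes the wrong way), and the natural route via the genie $X_2^n$ gives only $R_1\le \tfrac12\log\bigl(\tfrac{1}{a_{21}}+\tfrac{P_1}{1+P_2}\bigr)+\varepsilon$, which is strictly weaker since $a_{21}<1$. Your proposed fix---a genie reducing the effective cross-gain to unity---is too vague as stated; in particular it is not clear how such a genie avoids leaking enough about $X_2^n$ to receiver~1 that the resulting bound overshoots in the other direction. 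The paper sidesteps all of this by invoking Kramer's outer bound \cite[Theorem~2]{Kramer04}, whose broadcast-channel converse (with the free parameter $\beta$ tuned so that $R_2=C_2-\varepsilon$ lies on the boundary) yields $R_1\le\tfrac12\log(1+\tfrac{P_1}{1+P_2})+\delta(\varepsilon)$ directly; you mention this as a fallback, and that is in fact the argument you should use.
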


\begin{proof}
The first two items of this proposition follow
from \cite[Theorem~10]{Motahari09} or the earlier result in
\cite[Theorem~1]{Kramer04}. Eq.~\eqref{eq:maximal R1 for a two-user mixed Gaussian IC}
is a consequence of \cite[Theorem~2]{Kramer04}.
\end{proof}

\subsection{On Conjecture 1 for Strong GICs}
\label{subsection: On conjecture 1 for the Gaussian IC with strong interference}
The capacity region of a strong GIC is equal to the
intersection of the capacity regions of the two Gaussian
multiple-access channels from the two transmitters to each one of
the receivers (see \cite[Theorem~5.2]{Han81} and \cite{Sato81}). The
two corner points of this capacity region are consistent with
Conjecture~\ref{conjecture: Costa's conjecture for the corner points}.
Question~\ref{question 2: corner points} is answered in
the affirmative for a strong GIC because each receiver
is able to decode the messages of both users.

The capacity region of a very strong GIC, where $a_{12} \geq 1+P_1$ and $a_{21} \geq 1+P_2$,
is not affected by the interference \cite{Carleial75}. This is a trivial case where
Conjecture~\ref{conjecture: Costa's conjecture for the corner points} does not
provide a tight upper bound on the maximal transmission rate (note that if
$a_{12} > 1+P_1$ and $a_{21} > 1+P_2$, then $R_1^* > C_1$ and $R_2^* > C_2$).

\subsection{On the Corner Points of Weak One-Sided GICs}
\label{subsection: On Conjecture 1 for the one-sided Gaussian IC}
In \cite{Costa85}, an interesting equivalence has been established between weak one-sided GICs
and degraded GICs: a weak one-sided GIC with power constraints $P_1$ and $P_2$, and cross-link
gains $a_{12}=0$ and $a_{21}=a \in (0,1)$ in standard form, has an identical capacity
region to that of a degraded GIC whose standard form is given by
\begin{equation}
Y_1 = X_1 + \sqrt{\frac{1}{a}} \, X_2 + Z_1, \quad Y_2 = \sqrt{a}  X_1 + X_2 + Z_2
\label{eq:Y1,2 for degraded Gaussian IC}
\end{equation}
with the same power constraints on the inputs, and where
$Z_1 Z_2$ are independent Gaussian random variables with zero mean and unit variance.
The first part of Proposition~\ref{proposition: mixed interference} implies that one corner
point of a weak one-sided GIC is determined exactly, it is achievable
by treating the interference as noise, and it is given by
\begin{align}
& \left( \frac{1}{2} \, \log(1+P_1), \, \frac{1}{2} \, \log\left(1 + \frac{P_2}{1+a P_1} \right) \right).
\label{eq: 1st exact corner point for Z-Gaussian IC}
\end{align}
In \cite[Theorem~2]{Sason04}, it is shown that this corner point achieves the sum-rate of the
weak one-sided GIC. We consider in the following
the second corner point of the capacity region:
according to Proposition~\ref{proposition: mixed interference}, the second
corner point of the weak one-sided GIC is given by $(R_1, C_2)$ where
(see \eqref{eq:maximal R1 for a two-user mixed Gaussian IC})
\begin{align}
\frac{1}{2} \, \log\left(1+\frac{aP_1}{1+P_2}\right)
\leq R_1 \leq \frac{1}{2} \, \log\left(1+\frac{P_1}{1+P_2}\right).
\label{eq: R1 for second corner point of Z-IC}
\end{align}
The lower bound on $R_1$ follows from the achievability of the point $(R_1^*, C_2)$
for the degraded GIC in \eqref{eq:Y1,2 for degraded Gaussian IC}.
The following statement summarizes this short discussion on weak one-sided GICs.
\begin{proposition}
{\em Consider a weak one-sided GIC, which in standard
form has power constraints $P_1$ and $P_2$ for transmitters~1 and~2, respectively, and
whose cross-link gains are $a_{12}=0$ and $a_{21}=a$ for $0<a<1$. One
of the two corner points of its capacity region is given in
\eqref{eq: 1st exact corner point for Z-Gaussian IC}, and it achieves the sum-rate.
The other corner point is $(R_1, C_2)$ where $R_1$ satisfies the bounds in
\eqref{eq: R1 for second corner point of Z-IC}, and these bounds are tight when
$a \rightarrow 1$.}
\label{proposition: corner points for a one-sided Gaussian IC with weak interference}
\end{proposition}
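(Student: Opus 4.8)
The plan is to derive Proposition~\ref{proposition: corner points for a one-sided Gaussian IC with weak interference} almost entirely by specializing the already-established Proposition~\ref{proposition: mixed interference} to the degenerate case $a_{12}=0$, $a_{21}=a\in(0,1)$. First I would treat the one-sided GIC as the $a_{12}\to 0^{+}$ limit of a mixed GIC with $a_{12}\ge 1$ and $a_{21}<1$; but this is a limit that slides out of the ``mixed'' regime, so a cleaner route is to invoke the Costa equivalence in~\eqref{eq:Y1,2 for degraded Gaussian IC}: the weak one-sided GIC has the same capacity region as a degraded GIC with cross-link gains $\tilde a_{12}=\tfrac1a\ge 1$ and $\tilde a_{21}=a<1$, which \emph{is} mixed. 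I would then apply Proposition~\ref{proposition: mixed interference} to that equivalent degraded GIC, noting that $\tilde a_{12}\tilde a_{21}=1$ puts us exactly on the boundary $1-\tilde a_{12}\ge(\tilde a_{12}\tilde a_{21}-1)P_1$ (since $\tilde a_{12}\tilde a_{21}-1=0$ and $1-\tilde a_{12}=1-\tfrac1a\le 0$), so Item~2 applies and the corner point achieving the sum-rate is $(C_1,\,R_2^{*})$ with $R_2^{*}=\tfrac12\log\bigl(1+\tfrac{\tilde a_{12}P_2}{1+P_1}\bigr)=\tfrac12\log\bigl(1+\tfrac{P_2}{a(1+P_1)}\bigr)$ — wait, that is not~\eqref{eq: 1st exact corner point for Z-Gaussian IC}, so instead I would read~\eqref{eq: 1st exact corner point for Z-Gaussian IC} directly off Item~1 of Proposition~\ref{proposition: mixed interference} applied with $(a_{12},a_{21})\to$ the \emph{original} one-sided parameters, where $a_{21}=a$ and the bound $\tfrac12\log\bigl(1+\tfrac{P_2}{1+a_{21}P_1}\bigr)$ becomes exactly the second coordinate in~\eqref{eq: 1st exact corner point for Z-Gaussian IC}; achievability of this point by treating interference as noise and the fact that it attains the sum-rate are precisely \cite[Theorem~2]{Sason04}, which I would cite.

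For the second corner point $(R_1,C_2)$, the upper bound $R_1\le\tfrac12\log\bigl(1+\tfrac{P_1}{1+P_2}\bigr)$ is exactly~\eqref{eq:maximal R1 for a two-user mixed Gaussian IC} from Proposition~\ref{proposition: mixed interference} specialized to the one-sided channel (with $\delta(\varepsilon)\to 0$), since when $R_2\ge C_2-\varepsilon$ that proposition gives the stated bound on $R_1$, and in the one-sided case $a_{12}=0$ there is no interference at receiver~1 so the single-user bound $R_1\le C_1$ and the MAC-type bound combine to the claimed expression; I would spell out that this $R_1$ upper bound is the $\varepsilon\to 0$ version and hence holds for the exact corner point of the (closed, convex) capacity region. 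The lower bound $R_1\ge R_1^{*}=\tfrac12\log\bigl(1+\tfrac{aP_1}{1+P_2}\bigr)$ I would obtain from achievability: on the equivalent degraded GIC~\eqref{eq:Y1,2 for degraded Gaussian IC}, the point $(R_1^{*},C_2)$ is achievable — transmitter~2 codes at $C_2$ treating transmitter~1 as noise is not quite it; rather, receiver~2 can decode both messages (as in a strong/MAC configuration for the degraded channel's second link), so the pair $(R_1^{*},C_2)$ lies in the MAC region at receiver~2 and transmitter~1's signal is then decoded-and-subtracted-free at receiver~1, giving $R_1=R_1^{*}$ reliably. Since capacity regions of the original and degraded channels coincide, $(R_1^{*},C_2)$ is achievable for the one-sided GIC, establishing the left inequality in~\eqref{eq: R1 for second corner point of Z-IC}.

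Finally, tightness as $a\to 1$ is immediate: both ends of~\eqref{eq: R1 for second corner point of Z-IC} converge to $\tfrac12\log\bigl(1+\tfrac{P_1}{1+P_2}\bigr)$, pinning $R_1$ down. I would close by remarking that the whole statement is essentially a corollary of Proposition~\ref{proposition: mixed interference} together with \cite{Costa85} and \cite{Sason04}, so no new argument is needed. The main obstacle I anticipate is purely bookkeeping: being careful about which channel (original one-sided versus equivalent degraded) each cited bound is being applied to, and making sure the substitution of cross-link gains is done consistently — in particular that Item~1 of Proposition~\ref{proposition: mixed interference}, whose hypothesis $1-a_{12}<(a_{12}a_{21}-1)P_1$ fails for $a_{12}=0$, is \emph{not} the branch being used for the sum-rate corner point, and that instead the correct exact value~\eqref{eq: 1st exact corner point for Z-Gaussian IC} is the one supplied by \cite[Theorem~2]{Sason04} and the ``treat interference as noise'' achievability scheme. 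Everything else is substitution into closed-form logarithmic expressions and a one-line limit.
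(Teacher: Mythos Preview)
Your overall plan---use Costa's equivalence \eqref{eq:Y1,2 for degraded Gaussian IC} to replace the weak one-sided GIC by the degraded (hence mixed) GIC with $\tilde a_{12}=1/a\ge 1$, $\tilde a_{21}=a<1$, then invoke Proposition~\ref{proposition: mixed interference}---is exactly what the paper does. The meandering in your write-up comes from a single sign slip.

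For the degraded GIC you have $\tilde a_{12}\tilde a_{21}=1$, so the right-hand side of the branching condition is $(\tilde a_{12}\tilde a_{21}-1)P_1=0$, while the left-hand side is $1-\tilde a_{12}=1-\tfrac1a$, which is \emph{strictly negative} for $0<a<1$. Hence $1-\tilde a_{12}<(\tilde a_{12}\tilde a_{21}-1)P_1$ holds strictly, and it is Item~1 (not Item~2) of Proposition~\ref{proposition: mixed interference} that applies. Item~1 gives, at $R_1=C_1$, the bound
\[
R_2\le \tfrac12\log\!\Bigl(1+\tfrac{P_2}{1+\tilde a_{21}P_1}\Bigr)=\tfrac12\log\!\Bigl(1+\tfrac{P_2}{1+aP_1}\Bigr),
\]
which is precisely the second coordinate of \eqref{eq: 1st exact corner point for Z-Gaussian IC}; achievability by treating interference as noise then pins this corner exactly, and \cite[Theorem~2]{Sason04} certifies it attains the sum-rate. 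Your own parenthetical ``$1-\tfrac1a\le 0$'' already contradicts the ``$\ge$'' you asserted, and your pivot---applying Item~1 to the \emph{original} one-sided parameters with $a_{12}=0$---is formally illegitimate since Proposition~\ref{proposition: mixed interference} requires $a_{12}\ge 1$. Fixing the sign error removes the need for either detour.

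The remainder of your plan (upper bound on $R_1$ at the second corner via \eqref{eq:maximal R1 for a two-user mixed Gaussian IC} applied to the degraded GIC, lower bound via achievability of $(R_1^*,C_2)$ on that same channel, and the $a\to 1$ tightness by inspection of the two endpoints of \eqref{eq: R1 for second corner point of Z-IC}) is correct and coincides with the paper's argument.
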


The achievable rate region of Costa \cite{Costa_ITA11} for a weak one-sided GIC
coincides with the Han-Kobayashi achievable region
for i.i.d. Gaussian codebooks (see \cite[Section~2]{Zhao_ISIT12}). This
region has a corner point at $(R_1^*, C_2)$ where $R_1^*$ is given in
\eqref{eq:maximal R1 in Costa's conjecture} with $a_{21}=a$ (note that it is
equal to the lower bound in \eqref{eq: R1 for second corner point of Z-IC}).
However, it remains unknown whether the capacity-achieving input distribution is Gaussian.

\subsection{Organization of this paper}
The structure of this paper is as follows:
Conjecture~\ref{conjecture: Costa's conjecture for the corner points} is considered in
Section~\ref{section: On the Corner Points of the Capacity Region of a Two-User Gaussian IC with Weak Interference}
for a weak GIC.
The excess rate for the sum-rate w.r.t. the corner points of the capacity region is considered in
Section~\ref{section: On the Sub-Optimality of the Corner Points for a Gaussian IC with Weak Interference}.
A summary is provided in Section~\ref{section: Summary and Outlook} with some directions for further research.
Throughout this paper, two-user GICs are considered.

\section{On the Corner Points of the Capacity Region of a Weak GIC}
\label{section: On the Corner Points of the Capacity Region of a Two-User Gaussian IC with Weak Interference}
This section considers Conjecture~\ref{conjecture: Costa's conjecture for the corner points}
for a weak GIC.
It is easy to verify that the points $(R_1, R_2) = (C_1, R_2^*)$ and $(R_1^*, C_2)$ are both included
in the capacity region of a weak GIC, and the corresponding receiver of the transmitter that operates
at the single-user capacity can be designed to decode the messages of the two users.
We proceed in the following to the converse part, which leads to the following statement:
\begin{theorem}
{\em Consider a weak two-user GIC, and
let $C_1$, $C_2$, $R_1^*$ and $R_2^*$ be as defined in
\eqref{eq:capacity of AWGNs}--\eqref{eq:maximal R2 in Costa's conjecture}.
If $R_1 \geq C_1 - \varepsilon$ for an arbitrary $\varepsilon>0$, then reliable communication
requires that
\begin{align}
R_2 \leq \min \biggl\{
& R_2^* + \frac{1}{2} \, \log\left(1 + \frac{P_2}{(1+a_{21} P_1)(1+a_{12} P_2)} \right)
+ 2 \varepsilon, \nonumber \\[0.2cm]
& \frac{1}{2} \, \log \left(1+\frac{P_2}{1+P_1}\right)
+ \Bigl(1+\frac{1+P_1}{a_{21} P_2} \Bigr) \varepsilon \biggr\}.
\label{eq:upper bound on R2 for weak interference}
\end{align}
Similarly, if $R_2 \geq C_2 - \varepsilon$, then
\begin{align}
R_1 \leq \min \biggl\{
& R_1^* + \frac{1}{2} \, \log\left(1 + \frac{P_1}{(1+a_{21} P_1)(1+a_{12} P_2)} \right)
+ 2 \varepsilon, \nonumber \\[0.2cm]
& \frac{1}{2} \, \log \left(1+\frac{P_1}{1+P_2}\right)
+ \Bigl(1+\frac{1+P_2}{a_{12} P_1} \Bigr) \varepsilon \biggr\}.
\label{eq:upper bound on R1 for weak interference}
\end{align}
Consequently, the corner points of the capacity region are $(R_1, C_2)$ and $(C_1, R_2)$ where
\begin{align}
& R_1^* \leq R_1 \leq \min \left\{ R_1^* + \frac{1}{2} \, \log\left(1 + \frac{P_1}{(1+a_{21} P_1)(1+a_{12} P_2)} \right), \; \frac{1}{2} \, \log \left(1+\frac{P_1}{1+P_2}\right) \right\}
\label{eq:bound1 on the corner points for weak interference} \\[0.2cm]
& R_2^* \leq R_2 \leq \min \left\{ R_2^* + \frac{1}{2} \, \log\left(1 + \frac{P_2}{(1+a_{21} P_1)(1+a_{12} P_2)} \right), \; \frac{1}{2} \, \log \left(1+\frac{P_2}{1+P_1}\right) \right\}.
\label{eq:bound2 on the corner points for weak interference}
\end{align}
In the limit where $P_1$ and $P_2$ tend to infinity, which makes it an interference-limited
channel,
\begin{enumerate}
\item Conjecture~\ref{conjecture: Costa's conjecture for the corner points} holds, and it gives an
asymptotically tight bound.
\item The rate pairs $(C_1, R_2^*)$ and $(R_1^*, C_2)$ form the corner points of the capacity region.
\item The answer to Question~\ref{question 2: corner points} is affirmative.
\end{enumerate}}
\label{theorem: bounds on the corner points of a GIC with weak interference}
\end{theorem}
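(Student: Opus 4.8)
The plan is to prove the converse inequalities \eqref{eq:upper bound on R2 for weak interference}--\eqref{eq:upper bound on R1 for weak interference} by combining a saturation estimate --- which turns the hypothesis $R_1 \ge C_1 - \varepsilon$ into tight control on the differential entropy of the interference-plus-noise seen at receiver~1 --- with two of the existing genie-aided outer bounds on the capacity region of a weak GIC; the remaining claims (the corner-point sandwich bounds \eqref{eq:bound1 on the corner points for weak interference}--\eqref{eq:bound2 on the corner points for weak interference} and the three asymptotic items) then follow by limiting arguments.

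First I would establish the saturation estimate. By Fano's inequality, $nR_1 \le I(X_1^n;Y_1^n) + n\epsilon_n$ with $\epsilon_n \to 0$; writing $I(X_1^n;Y_1^n) = h(Y_1^n) - h(\sqrt{a_{12}}\,X_2^n + Z_1^n)$, bounding $h(Y_1^n) \le \tfrac{n}{2}\log\bigl(2\pi e(1+P_1+a_{12}P_2)\bigr)$ via the maximum-entropy inequality together with the power constraints and concavity of $\log$, and inserting $R_1 \ge C_1 - \varepsilon$, one obtains
\[
\tfrac1n\, h\bigl(\sqrt{a_{12}}\,X_2^n + Z_1^n\bigr) \;\le\; R_2^* + \tfrac12\log(2\pi e) + \varepsilon + \epsilon_n .
\]
In words: if user~1 operates within $\varepsilon$ of its interference-free capacity, then the footprint of $X_2$ at receiver~1 carries essentially the least entropy compatible with Gaussian signalling, namely about $nR_2^*$. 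Applying the same reasoning to $I(X_1^n;Y_1^n\mid X_2^n) \ge I(X_1^n;Y_1^n)$ (conditioning on the independent interferer cannot decrease the mutual information) gives $\tfrac1n h(X_1^n+Z_1^n) \ge \tfrac12\log\bigl(2\pi e(1+P_1)\bigr) - \varepsilon - \epsilon_n$; since $a_{21}<1$, the random vector $\sqrt{a_{21}}\,X_1^n + Z_2^n$ is distributed as $\sqrt{a_{21}}(X_1^n+Z_1^n)$ plus an independent Gaussian of variance $1-a_{21}$ for a suitable joint law of $(Z_1,Z_2)$ --- legitimate because the capacity region depends only on the marginals --- so the entropy-power inequality yields $\tfrac1n h\bigl(\sqrt{a_{21}}\,X_1^n + Z_2^n\bigr) \ge \tfrac12\log\bigl(2\pi e(1+a_{21}P_1)\bigr) - c\,(\varepsilon+\epsilon_n)$ for an explicit $c = c(P_1,a_{21})$.

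Next I would start from $nR_2 \le I(X_2^n;Y_2^n) + n\epsilon_n$ and feed these estimates into two existing outer bounds of Etkin--Tse--Wang / Kramer / Annapureddy--Veeravalli type, each re-derived just far enough to expose, with a positive sign, the functional $h(\sqrt{a_{12}}\,X_2^n+Z_1^n)$ (respectively the pair of entropies above). After simplification, one of these should collapse to the first term of the minimum in \eqref{eq:upper bound on R2 for weak interference}, namely $R_2^* + \tfrac12\log\bigl(1 + \tfrac{P_2}{(1+a_{21}P_1)(1+a_{12}P_2)}\bigr) + 2\varepsilon$, while the other --- a bound engineered to retain the denominator $1+P_1$ rather than $1+a_{21}P_1$, at the cost of extra $\varepsilon$-slack when the additive slack in the entropy estimate is pushed through the logarithm --- should collapse to $\tfrac12\log\bigl(1+\tfrac{P_2}{1+P_1}\bigr) + \bigl(1+\tfrac{1+P_1}{a_{21}P_2}\bigr)\varepsilon$. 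Taking the minimum gives \eqref{eq:upper bound on R2 for weak interference}; the dual inequality \eqref{eq:upper bound on R1 for weak interference} then follows by interchanging the roles of the two users, the model being symmetric under that relabelling.

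Finally, letting $\varepsilon\downarrow 0$ in \eqref{eq:upper bound on R2 for weak interference}--\eqref{eq:upper bound on R1 for weak interference} and combining with the achievability of $(C_1,R_2^*)$ and $(R_1^*,C_2)$ recorded just before the theorem (these pairs lie in the capacity region, and the receiver at single-user rate can be made to decode both messages) produces the sandwich bounds \eqref{eq:bound1 on the corner points for weak interference}--\eqref{eq:bound2 on the corner points for weak interference}. Since $\tfrac{P_2}{(1+a_{21}P_1)(1+a_{12}P_2)} < \tfrac1{a_{12}a_{21}P_1}$, the extra logarithmic term in the first component of each minimum vanishes as $P_1,P_2\to\infty$, so the upper bounds converge to the lower bounds $R_1^*,R_2^*$; this gives item~1 (Conjecture~\ref{conjecture: Costa's conjecture for the corner points} holds asymptotically, with $\delta_1(\varepsilon),\delta_2(\varepsilon)$ read off from the $\varepsilon$-coefficients), item~2 (the corner points are $(C_1,R_2^*)$ and $(R_1^*,C_2)$), and item~3 (the designated receiver decodes both messages, so Question~\ref{question 2: corner points} is answered in the affirmative). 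The main obstacle is the middle step: choosing the right pre-existing outer bounds and carrying their derivation far enough to isolate, with the correct sign, precisely the entropy functional controlled by the saturation estimate, and then bookkeeping how the $\varepsilon$-slack --- magnified by the entropy-power inequality and by the replacement of $1+a_{21}P_1$ with $1+P_1$ in the denominator --- becomes exactly the coefficients $2$ and $1+\tfrac{1+P_1}{a_{21}P_2}$ (respectively $1+\tfrac{1+P_2}{a_{12}P_1}$); a secondary subtlety is optimizing the joint law of $(Z_1,Z_2)$, which is free in both the genie constructions and the entropy-power step.
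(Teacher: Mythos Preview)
Your proposal takes a genuinely different route from the paper. The paper's proof is entirely ``black-box'': it invokes two known outer bounds on the capacity region --- the ETW inequality on $2R_1+R_2$ from \cite[Theorem~3]{EtkinTseWang08} and Kramer's parameterized outer region $\mathcal{K}_1$ from \cite[Theorem~2]{Kramer04} --- and then does only elementary algebra and one-variable calculus. For the first term in the minimum, one simply subtracts $2R_1 \geq 2(C_1-\varepsilon)$ from the ETW bound on $2R_1+R_2$; the coefficient~$2$ in front of $\varepsilon$ falls out trivially. For the second term, Kramer's region is a one-parameter family in $\beta$; one solves explicitly for the $\beta$ at which $R_1=C_1-\varepsilon$ on its boundary, substitutes into the $R_2$ constraint, and bounds the resulting $\delta(\varepsilon)$ via the mean-value theorem applied to $\delta'(c)$, which produces the coefficient $1+\tfrac{1+P_1}{a_{21}P_2}$ exactly. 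No saturation estimate, no entropy-power inequality, no optimization over the joint law of $(Z_1,Z_2)$ is needed.

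Your saturation estimate on $\tfrac1n h(\sqrt{a_{12}}\,X_2^n+Z_1^n)$ is correct and is essentially what lives inside the ETW derivation, so your route to the first term could be made to work --- but it re-proves a piece of a bound the paper simply quotes. The genuine gap is in your second term: you hope that ``another'' genie-aided bound, filtered through an entropy-power step, will deliver the denominator $1+P_1$ (rather than $1+a_{21}P_1$) and the precise slack coefficient $1+\tfrac{1+P_1}{a_{21}P_2}$, but you do not identify which bound or carry out the bookkeeping. This is not a cosmetic omission. Kramer's bound comes from a degraded-broadcast-channel argument with a very specific parametric structure, and the coefficient in question arises in the paper from differentiating $\tfrac12\log(1+\beta P')$ along the boundary curve $\beta=\beta(\varepsilon)$ --- not from any EPI slack. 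Your own closing paragraph correctly flags this middle step as ``the main obstacle''; the paper sidesteps it entirely by citing Kramer's bound as a finished result and doing three lines of calculus.
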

\begin{proof}
The proof of this theorem relies on the two outer bounds on the capacity region
that are given in \cite[Theorem~3]{EtkinTseWang08} and \cite[Theorem~2]{Kramer04}.

Suppose that $R_1 \geq C_1-\varepsilon$ bits per channel use.
The outer bound by Etkin {\em et al.} in \cite[Theorem~3]{EtkinTseWang08} (it is also
known as the ETW bound) yields that the rates $R_1$ and $R_2$ satisfy the inequality
constraint
\begin{align*}
2 R_1 + R_2 \leq & \frac{1}{2} \, \log\bigl(1+P_1+a_{12} P_2\bigr) +
\frac{1}{2} \, \log\left(\frac{1+P_1}{1+a_{21}P_1} \right)
+ \frac{1}{2} \, \log\left(1+a_{21} P_1 + \frac{P_2}{1+a_{12} P_2} \right)
\end{align*}
which therefore yields that (see \eqref{eq:capacity of AWGNs} and \eqref{eq:maximal R2 in Costa's conjecture})
\begin{align}
R_2 \leq & \frac{1}{2} \, \log\bigl(1+P_1+a_{12} P_2\bigr) +
\frac{1}{2} \, \log\left(\frac{1+P_1}{1+a_{21}P_1} \right) + \frac{1}{2} \, \log\left(1+a_{21} P_1 + \frac{P_2}{1+a_{12} P_2} \right)
- \bigl( \log(1+P_1) - 2 \varepsilon \bigr) \nonumber \\[0.1cm]
= & R_2^* + \frac{1}{2} \, \log\left(1 + \frac{P_2}{(1+a_{21} P_1)(1+a_{12} P_2)} \right) + 2 \varepsilon.
\label{eq:first term in the bound on R_2}
\end{align}
The outer bound by Kramer in \cite[Theorem~2]{Kramer04}, formulated here in an equivalent form,
states that the capacity region is included
in the set $\mathcal{K} = \mathcal{K}_1 \cap \mathcal{K}_2$ where
\begin{align} \label{eq:outer-bound-Kramer}
   \mathcal{K}_1 & = \left\{ (R_1,R_2) :
   \begin{array}{l}
   0 \le R_1  \le \frac{1}{2} \, \log\left(1+\frac{(1-\beta)P'}{\beta P' + \frac{1}{a_{21}}} \right) \\[0.3cm]
   0 \le R_2  \le \frac{1}{2}\log(1 + \beta P')
   \end{array} \right\}
\end{align}
with $P' \triangleq P_2 + \frac{P_1}{a_{21}}$ and $\beta \in \bigl[\frac{P_2}{(1+P_1) P'}, \, \frac{P_2}{P'}\bigr]$
is a free parameter;
the set $\mathcal{K}_2$ is obtained by swapping the indices in $\mathcal{K}_1$. From the boundary of the outer bound in \eqref{eq:outer-bound-Kramer}, the value of $\beta$ that satisfies the equality
$$ \frac{1}{2} \, \log\left(1+\frac{(1-\beta)P'}{\beta P' + \frac{1}{a_{21}}} \right)
= C_1 - \varepsilon$$ is given by
$$\beta = \frac{2^{2 \varepsilon} P_2 + \frac{(2^{2 \varepsilon}-1) (1+P_1)}{a_{21}}}{(1+P_1) \, \left(P_2+\frac{P_1}{a_{21}}\right)} \, .$$
The substitution of this value of $\beta$ into the upper bound on $R_2$ in
\eqref{eq:outer-bound-Kramer} implies that if $R_1 \geq C_1 - \varepsilon$ then
\begin{align}
R_2 & \leq \frac{1}{2} \log(1 + \beta P') \nonumber \\[0.1cm]
& = \frac{1}{2} \, \log\left(1 + \frac{P_2}{1+P_1} \right) + \delta(\varepsilon)
\label{eq:1st step for the 2nd term of the bound on R_2}
\end{align}
where
$$\delta(\varepsilon) = \frac{1}{2} \log \left(1 + \frac{(2^{2\varepsilon}-1) \left(P_2+\frac{1+P_1}{a_{21}}\right)}{1+P_1+P_2} \right).$$
The function $\delta$ satisfies $\delta(0) = 0$, and straightforward calculus shows that
$$0 < \delta'(c) < 1+\frac{1+P_1}{a_{21} P_2}, \quad \forall \, c \geq 0.$$
It therefore follows (from the mean-value theorem of calculus) that
\begin{align}
0 < \delta(\varepsilon) < \left( 1+\frac{1+P_1}{a_{21} P_2} \right) \varepsilon.
\label{eq:2nd step for the 2nd term of the bound on R_2}
\end{align}
A combination of \eqref{eq:first term in the bound on R_2},
\eqref{eq:1st step for the 2nd term of the bound on R_2},
\eqref{eq:2nd step for the 2nd term of the bound on R_2} gives the
upper bound on the rate $R_2$ in \eqref{eq:upper bound on R2 for weak interference}.
Similarly, if $R_2 \geq C_2 - \varepsilon$, the upper bound on the rate $R_1$
in \eqref{eq:upper bound on R1 for weak interference} is obtained by swapping
the indices in \eqref{eq:upper bound on R2 for weak interference}.

From the inclusion of the points $(C_1, R_2^*)$ and $(R_1^*, C_2)$ in the capacity region,
and the bounds in \eqref{eq:upper bound on R2 for weak interference} and
\eqref{eq:upper bound on R1 for weak interference} in the limit where $\varepsilon \rightarrow 0$,
it follows that the corner points of the capacity region are $(R_1, C_2)$ and $(C_2, R_1)$ with
the bounds on $R_1$ and $R_2$ in \eqref{eq:bound1 on the corner points for weak interference} and
\eqref{eq:bound2 on the corner points for weak interference}, respectively.

Since the point $(C_1, R_2^*)$ is achievable, also is $(R_1, R_2^*)$ for $R_1 < C_1$; hence, if
$C_1 - \varepsilon \leq R_1 < C_1$, then the maximal rate $R_2$ of transmitter~2 satisfies
\begin{align*}
R_2^* \leq R_2 & \leq R_2^* + \frac{1}{2} \, \log\left(1 + \frac{P_2}{(1+a_{21} P_1)(1+a_{12} P_2)} \right)
+ 2 \varepsilon.
\end{align*}
The uncertainty in the maximal achievable rate $R_2$ when $R_1 \geq C_1 - \varepsilon$ and
$\varepsilon \rightarrow 0$ is therefore upper bounded by
$\Delta R_2 \triangleq \frac{1}{2} \, \log\left(1 + \frac{P_2}{(1+a_{21} P_1)(1+a_{12} P_2)} \right).$
The asymptotic case where $P_1, P_2 \rightarrow \infty$ and $\frac{P_2}{P_1} \rightarrow k$
for an arbitrary $k>0$ is examined in the following:
In this case, $R_2^* \rightarrow \frac{1}{2} \, \log(1+k a_{12})$ and $\Delta R_2 \rightarrow 0$
which proves that Conjecture~\ref{conjecture: Costa's conjecture for the corner points}
holds in this asymptotic case where the transmitted powers tend to infinity. Since the points
$(C_1, R_2^*)$ and $(R_1^*, C_2)$ are included in the capacity region, it follows from this
converse that they asymptotically form the corner points of this region.
As is explained above, operating at the points $(C_1, R_2^*)$ or $(R_1^*, C_2)$ enables
receiver~1 or~2, respectively, to decode both messages. This answers
Question~\ref{question 2: corner points} in the affirmative for the considered asymptotic case.
\end{proof}

\begin{remark}
{\em Consider a weak symmetric GIC where
$P_1 = P_2 = P$ and $a_{12} = a_{21} = a \in (0,1)$. The corner points of the
capacity region of this two-user interference channel are given by $(C, R_{\text{c}})$
and $(R_{\text{c}}, C)$ where $C = \frac{1}{2} \, \log(1+P)$ is the capacity
of a single-user AWGN channel with input power constraint $P$, and an
additive Gaussian noise with zero mean and unit variance.
Theorem~\ref{theorem: bounds on the corner points of a GIC with weak interference}
gives that
\begin{align}
R_{\text{c}} \leq  \min \left\{ \frac{1}{2} \, \log\left(1+\frac{aP}{1+P}\right)
+ \frac{1}{2} \, \log\left(1+\frac{P}{(1+aP)^2}\right), \;
\frac{1}{2} \, \log\left(1+\frac{P}{1+P}\right) \right\}.
\label{eq:upper bound on the corner point for symmetric GIC}
\end{align}
In the following, we compare the two terms inside the minimization in
\eqref{eq:upper bound on the corner point for symmetric GIC} where the first
term follows from the ETW bound in \cite[Theorem~3]{EtkinTseWang08},
and the second term follows from Kramer's bound in \cite[Theorem~2]{Kramer04}.
Straightforward algebra reveals that, for $a \in (0,1)$, the first
term gives a better bound on $R_{\text{c}}$ if and only if
\begin{align}
P > \frac{2 a^2 - a + 1 + \sqrt{5 a^2 - 2a + 1}}{2 a^2 (1-a)} \, .
\label{eq:condition on P where the ETW bound is tighter than Kramer's bound}
\end{align}
Hence, for an arbitrary cross-link gain $a \in (0,1)$ of a symmetric and weak two-user GIC,
there exists a threshold for the SNR where above it, the ETW bound provides a better upper
bound on the corner points; on the other hand, for values of SNR below
this threshold, Kramer's bound provides a better bound on the corner points. The
dependence of the threshold for the SNR $(P)$ on the cross-link gain is shown in
Figure~\ref{Figure:critical SNR}.
\begin{figure}[here!]
\begin{center}
\epsfig{file=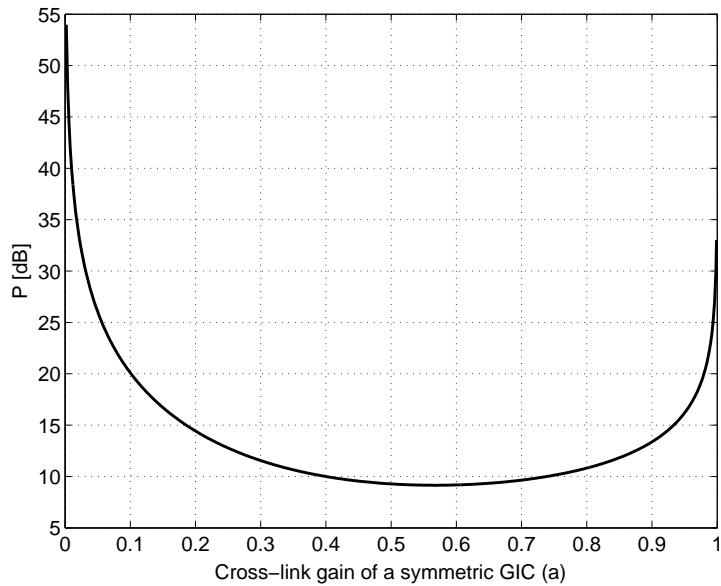,scale=0.55}
\caption{\label{Figure:critical SNR}
The curve in this figure shows the threshold for the SNR $(P)$, in decibels,
as a function of the cross-link gain $(a)$ for a weak and symmetric GIC.
This threshold is given by the right-hand side of
\eqref{eq:condition on P where the ETW bound is tighter than Kramer's bound}.
For points $(a, P)$ above this curve, the ETW bound is better in providing
an upper bound on the corner points of the capacity region, whereas Kramer's bound
is better in this respect for points $(a, P)$ below this curve.}
\end{center}
\end{figure}
The threshold for the SNR ($P$), as is shown in Figure~\ref{Figure:critical SNR},
tends to infinity if $a \rightarrow 0$ or $a \rightarrow 1$; this implies that in these two
cases, Kramer's bound is better for all values of $P$. This is further
discussed in the following:
\begin{enumerate}
\item If $a \rightarrow 0$ then, for every $P > 0$,
the first term on the right-hand side of
\eqref{eq:upper bound on the corner point for symmetric GIC}
tends to the capacity $C$; this forms a trivial upper bound
on the value $R_{\text{c}}$ of the corner point.
On the other hand, the second term on the right-hand side of
\eqref{eq:upper bound on the corner point for symmetric GIC}
gives the upper bound of $\frac{1}{2} \, \log\left(1+\frac{P}{1+P}\right)$
which is smaller than $C$ for all values of $P$. Note that the second term in
\eqref{eq:upper bound on the corner point for symmetric GIC} implies
that, for a symmetric GIC, $R_{\text{c}} \leq \frac{1}{2}$ bit per channel use
for all values of $P$. In fact, for a given $P$, the advantage of the second term
in the extreme case where $a \rightarrow 0$ served as the initial motivation for
incorporating it in
Theorem~\ref{theorem: bounds on the corner points of a GIC with weak interference}.
\item If $a \rightarrow 1$ then, for every $P > 0$, the first term tends to
$\frac{1}{2} \, \log\left(1+\frac{P}{1+P}\right)
+ \frac{1}{2} \, \log\left(1+\frac{P}{(1+P)^2}\right)$ which is larger than
the second term. Hence, also in this case, the second term gives a better bound
for all values of~$P$.
\end{enumerate}}
\end{remark}

\begin{example}
{\em The condition in \eqref{eq:condition on P where the ETW bound is tighter than Kramer's bound}
is consistent with \cite[Figs. 10 and 11]{Motahari09}, as explained in the following:
\begin{enumerate}
\item According to \cite[Fig. 10]{Motahari09}, for $P = 7$ and $a = 0.2$,
Kramer's outer bound gives a better upper bound on the corner point than
the ETW bound.
For $a = 0.2$, the complementary of the condition in
\eqref{eq:condition on P where the ETW bound is tighter than Kramer's bound}
implies that Kramer's bound is indeed better in this respect for $P <  27.725$.
This is supported by Figure~\ref{Figure:critical SNR}.
\item According to \cite[Fig. 11]{Motahari09}, for $P=100$ and $a = 0.1$, the
ETW is nearly as tight as Kramer's bound in providing an upper bound on the corner point.
For $a = 0.1$, the complementary of the condition in
\eqref{eq:condition on P where the ETW bound is tighter than Kramer's bound}
implies that Kramer's outer bound gives a better upper bound on the corner point than
the ETW bound if $P < 102.33$ (as is supported by Figure~\ref{Figure:critical SNR});
hence, for $P=100$, there is only a slight advantage
to Kramer's bound over the ETW bound that is not visible in \cite[Fig. 11]{Motahari09}:
Kramer's bound gives an upper bound on $R_{\text{c}}$ that is equal to 0.4964 bits per channel
use, and the ETW bound gives an upper bound of 0.5026 bits per channel use.
\end{enumerate}}
\end{example}

\vspace*{0.1cm}
\begin{remark}
{\em If $a_{12} a_{21} P_{1,2} \gg 1$, then it follows from
\eqref{eq:bound1 on the corner points for weak interference} and
\eqref{eq:bound2 on the corner points for weak interference}
that the two corner points of the capacity region approximately coincide with
the points $(R_1^*, C_2)$ and $(C_1, R_2^*)$ in
Conjecture~\ref{conjecture: Costa's conjecture for the corner points}.}
\label{remark: on the accuracy of the assessment of the corner points for weak interference}
\end{remark}

\vspace*{0.1cm}
In the following example, we evaluate the bounds in
Theorem~\ref{theorem: bounds on the corner points of a GIC with weak interference}
for finite values of transmitted powers ($P_1$ and $P_2$) to illustrate the asymptotic
tightness of these bounds.

\begin{example}
{\em Consider a weak and symmetric GIC where $a=0.5$ and $P=100$.
Assume that transmitter~1 operates at the single-user capacity
$C = \frac{1}{2} \log(1+P) = 3.33$~bits per channel use. According to
\eqref{eq:bound2 on the corner points for weak interference}, the corresponding
maximal rate $R_2$ of transmitter~2 is between 0.292 and 0.317 bits per channel
use; the upper bound on $R_2$ in this case follows from the ETW bound. This gives
good accuracy in the assessment of the two corner points of the capacity region (see
Remark~\ref{remark: on the accuracy of the assessment of the corner points for weak interference}
where, in this case, $a^2 P = 25 \gg 1$).
If $P$ is increased by 10~dB (to 1000), and transmitter~1 operates at the single-user capacity
$C = \frac{1}{2} \log(1+P) = 5.0$~bits per channel use, then the corresponding
maximal rate $R_2$ is between 0.292 and 0.295~bits per channel use.
Hence, the precision of the assessment of the corner points is improved in the latter case.
The improved accuracy of the latter assessment when the value of $P$ is increased is consistent with
Remark~\ref{remark: on the accuracy of the assessment of the corner points for weak interference},
and the asymptotic tightness of the bounds in
Theorem~\ref{theorem: bounds on the corner points of a GIC with weak interference}.
\begin{figure}[here!]
\begin{center}
\epsfig{file=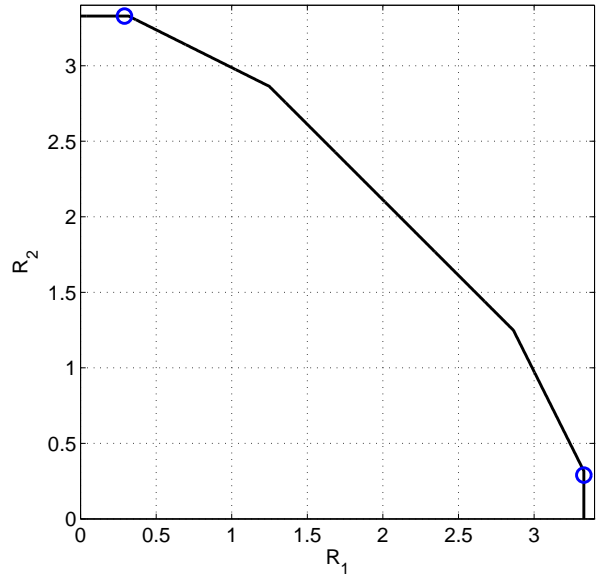,scale=0.55}
\caption{\label{Figure:on the corner points of the capacity region - weak interference}
The figure refers to a GIC
with cross-link gains $a_{12}=a_{21}=0.5$, and a common transmitted power
$P_1 = P_2 = 100$ in standard form (see
Example~\ref{example:corner points conjecture for Gaussian IC with weak interference}).
The solid curve is the boundary of the outer bound in
\cite[Theorem~3]{EtkinTseWang08} (the ETW bound in
\eqref{eq: ETW bound for GIC with weak interference}), and the two circled points refer to
Conjecture~\ref{conjecture: Costa's conjecture for the corner points};
these points are achievable, and they almost coincide with the boundary of the outer
bound.}
\end{center}
\end{figure}
Figure~\ref{Figure:on the corner points of the capacity region - weak interference}
refers to a weak and symmetric GIC where $P_1=P_2=100$ and $a_{12} = a_{21} = 0.5$.
The solid line in this figure corresponds to the boundary of the ETW outer bound on
the capacity region (see \cite[Theorem~3]{EtkinTseWang08}) which is given (in units
of bits per channel use) by
\begin{align} \label{eq: ETW bound for GIC with weak interference}
   \mathcal{R}_{\text o} & = \left\{ (R_1,R_2) :
   \begin{array}{l}
   0 \le R_1  \le 3.3291 \\[0.1cm]
   0 \le R_2  \le 3.3291 \\[0.1cm]
   R_1 + R_2  \le 4.1121 \\[0.1cm]
   2R_1 + R_2 \le 6.9755 \\[0.1cm]
   R_1 + 2R_2 \le 6.9755
   \end{array} \right\}.
\end{align}
The two circled points correspond to Conjecture~\ref{conjecture: Costa's conjecture for the corner points};
these points are achievable, and (as is verified numerically)
they almost coincide with the boundary of the outer
bound in \cite[Theorem~3]{EtkinTseWang08}.}
\label{example:corner points conjecture for Gaussian IC with weak interference}
\end{example}

\section{The Excess Rate for the Sum-Rate w.r.t. the Corner Points of the Capacity Region}
\label{section: On the Sub-Optimality of the Corner Points for a Gaussian IC with Weak Interference}
The sum-rate of a mixed, strong or one-sided GIC is attained at a corner point of its capacity
region. This is in contrast to a (two-sided) weak GIC whose sum-rate is not attained at a
corner point of its capacity region. It is therefore of interest to examine the excess rate
for the sum-rate w.r.t. these corner points by measuring the gap between the sum-rate
$(C_{\text{sum}})$ and the maximal total rate ($R_1+R_2$) at the corner points of the capacity region:
\begin{align}
& \Delta \triangleq C_{\text{sum}} - \max \bigl\{R_1+R_2 \colon (R_1, R_2) \; \text{is a corner point} \bigr\}.
\label{eq:Delta}
\end{align}
The parameter $\Delta$ measures the excess rate for the sum-rate w.r.t. the case where one transmitter
operates at its single-user capacity, and the other reduces its rate to the point where reliable
communication is achievable.
We have $\Delta=0$ for mixed, strong and one-sided GICs. This section derives bounds on $\Delta$ for
weak GICs, and it also provides an asymptotic analysis analogous
to the study of the generalized degrees of freedom (where the SNR and INR scalings are coupled such
that $\frac{\log(\text{INR})}{\log(\text{SNR})} = \alpha \geq 0$). This leads to an asymptotic
characterization of this gap which is demonstrated to be exact for the whole range of $\alpha$.
The upper and lower bounds on $\Delta$ are shown in this section to be asymptotically tight in
the sense that they achieve the exact asymptotic characterization. Improvements of the bounds
on $\Delta$ are derived in this section for finite SNR and INR, and these bounds are
exemplified numerically.

For the analysis in this section, the bounds in
Theorem~\ref{theorem: bounds on the corner points of a GIC with weak interference} and bounds on
the sum-rate (see, e.g., \cite{CostaNair-ITA12}, \cite{EtkinTseWang08}, \cite{Etkin_ISIT09}, \cite{Kramer09}, \cite{TuninettiW_ISIT2008} and \cite{Zhao_Suhas12}) are used to obtain upper and lower bounds
on $\Delta$.

\subsection{An Upper Bound on $\Delta$ for Weak GICs}
\label{subsection: An upper bound on delta for the corner points of the capacity region}
The following derivation of an upper bound on $\Delta$ relies on an upper bound on
the sum-rate, and a lower bound on the maximal value of $R_1 + R_2$ at the two corner points
of its capacity region. Since the points $(R_1^*, C_2)$ and $(C_1, R_2^*)$ are achievable
for a weak GIC, it follows that
\begin{align}
& \max \bigl\{R_1+R_2 \colon (R_1, R_2) \; \text{is a corner point} \bigr\} \nonumber \\
& \geq \max \{R_1^* + C_2, R_2^* + C_1 \} \nonumber \\
& = \frac{1}{2} \, \max \Bigl\{\log(1+P_2+a_{21} P_1), \log(1+P_1+a_{12} P_2) \Bigr\}.
\label{eq:lower bound on total throughput for the two corner points - weak interference}
\end{align}
An outer bound on the capacity region of a weak GIC is provided in
\cite[Theorem~3]{EtkinTseWang08}. This bound leads to the following
upper bound on the sum-rate:
\begin{align}
& C_{\text{sum}} \leq
\frac{1}{2} \, \min \Biggl\{ \log(1 + P_1) + \log\left( 1 + \frac{P_2}{1 + a_{21}P_1} \right),
\; \log(1 + P_2) + \log\left( 1 + \frac{P_1}{1 + a_{12}P_2} \right), \nonumber \\
& \qquad \qquad \qquad \; \; \log\left( 1 + a_{12}P_2 + \frac{P_1}{1 + a_{21}P_1} \right)
+ \log\left( 1 + a_{21}P_1 + \frac{P_2}{1 + a_{12}P_2} \right) \Biggr\}.
\label{eq:upper bound on the sum-rate - weak interference}
\end{align}
Consequently, combining
\eqref{eq:Delta}--\eqref{eq:upper bound on the sum-rate - weak interference}
gives the following upper bound on $\Delta$:
\begin{align}
& \Delta \leq \frac{1}{2} \, \Biggl[ \min \biggl\{ \log(1 + P_1) + \log\left( 1 + \frac{P_2}{1 + a_{21}P_1} \right),
\; \log(1 + P_2) + \log\left( 1 + \frac{P_1}{1 + a_{12}P_2} \right), \nonumber \\
& \qquad \qquad \qquad \; \; \log\left( 1 + a_{12}P_2 + \frac{P_1}{1 + a_{21}P_1} \right)
+ \log\left( 1 + a_{21}P_1 + \frac{P_2}{1 + a_{12}P_2} \right) \biggr\} \nonumber \\
& \qquad \quad - \max \Bigl\{\log(1+P_2+a_{21} P_1), \log(1+P_1+a_{12} P_2) \Bigr\} \Biggr].
\label{eq:upper bound on Delta - Gaussian IC with weak interference}
\end{align}
For a weak and symmetric GIC, where $P_1 = P_2 = P$ and $a_{12} = a_{21} = a$ ($0 < a < 1$),
\eqref{eq:upper bound on Delta - Gaussian IC with weak interference} is simplified to
\begin{align}
\Delta &= \Delta(P,a) \nonumber \\
& \leq \frac{1}{2} \, \Biggl[ \min \biggl\{ \log(1 + P) + \log\left( 1 + \frac{P}{1+aP} \right),
\; 2 \log\left( 1 + aP + \frac{P}{1 + aP} \right) \biggr\}
 - \log\bigl(1+(1+a)P\bigr) \Biggr] \nonumber \\[0.1cm]
&= \frac{1}{2} \, \min\left\{ \log\left(\frac{1+P}{1+aP}\right),
\; \log\left(1+\frac{P}{(1+aP)^2} + \frac{aP \bigl[P+(1+aP)^2\bigr]}{(1+aP)
\bigl(1+(a+1)P\bigr)} \right) \right\}.
\label{eq:upper bound on Delta - symmetric Gaussian IC with weak interference}
\end{align}
Hence, in the limit where we let $P$ tend to infinity,
\begin{align}
\lim_{P \rightarrow \infty} \Delta(P,a) \leq \frac{1}{2} \, \log\left(\frac{1}{a}\right),
\quad \forall \, a \in (0,1).
\label{eq:upper bound on Delta - large P, symmetric Gaussian IC with weak interference}
\end{align}
Note that, for $a=1$, the capacity region is the polyhedron that is obtained from the
intersection of the capacity regions of the two underlying Gaussian multiple-access channels.
This implies that $\Delta(P,1)=0$, so
the bound in \eqref{eq:upper bound on Delta - large P, symmetric Gaussian IC with weak interference}
is continuous from the left at $a=1$.

\subsection{A Lower Bound on $\Delta$ for Weak GICs}
\label{subsection: A lower bound on delta for the corner points of the capacity region}
The following derivation of a lower bound on $\Delta$ relies on a lower bound on
the sum-rate, and an upper bound on the maximal value of $R_1 + R_2$ at the two corner points
of the capacity region.
From Theorem~\ref{theorem: bounds on the corner points of a GIC with weak interference},
the maximal total rate at the corner points of the capacity region of a weak GIC is upper
bounded as follows:
\begin{align}
& \max \bigl\{R_1+R_2 \colon (R_1, R_2) \; \text{is a corner point} \bigr\} \nonumber \\
& \stackrel{\text{(a)}}{\leq} \min \Biggl\{ \max \Biggl\{R_1^* + C_2 + \frac{1}{2} \,
\log\left(1+\frac{P_1}{(1+a_{21}P_1)(1+a_{12}P_2)} \right),
\nonumber \\
& \qquad \qquad \qquad \quad R_2^* + C_1 + \frac{1}{2} \,
\log\left(1+\frac{P_2}{(1+a_{21}P_1)(1+a_{12}P_2)}\right) \Biggr\}, \;
\frac{1}{2} \, \log(1+P_1+P_2) \Biggr\} \nonumber \\
& \stackrel{\text{(b)}}{=} \frac{1}{2} \, \min \Biggl\{ \max \Biggl\{\log(1+P_2+a_{21} P_1)
+ \log\left(1+\frac{P_1}{(1+a_{21}P_1)(1+a_{12}P_2)} \right), \nonumber \\
& \qquad \qquad \qquad \quad \; \log(1+P_1+a_{12} P_2) +
\log\left(1+\frac{P_2}{(1+a_{21}P_1)(1+a_{12}P_2)} \right) \Biggr\}, \;
\log(1+P_1+P_2) \Biggr\}
\label{eq:upper bound on total throughput for the two corner points - weak interference}
\end{align}
where inequality~(a) follows from \eqref{eq:bound1 on the corner points for weak interference},
\eqref{eq:bound2 on the corner points for weak interference}, and the equality
$$\max \bigl\{\min\{a,c\}, \, \min\{b,c\} \bigr\} = \min\bigl\{ \max \{a, b\}, c\bigr\}, \quad
\forall \, a, b, c \in \reals$$
and equality~(b) follows from \eqref{eq:capacity of AWGNs}--\eqref{eq:maximal R2 in Costa's conjecture}.

In order to get a lower bound on the sum-rate of the capacity region of a weak GIC,
we rely on the particularization of the outer bound in
\cite{TelatarTse07} for a GIC. This leads to the following outer bound $\mathcal{R}_{\text{o}}$ in \cite[Section~6.7.2]{ElGamalKim_book}:
\begin{align} \label{eq:outer-awgn}
   \mathcal{R}_{\text o} & = \left\{ (R_1,R_2) :
   \begin{array}{l}
   0 \le R_1  \le \frac{1}{2}\log(1 + P_1) \\[0.1cm]
   0 \le R_2  \le \frac{1}{2}\log(1 + P_2) \\[0.1cm]
   R_1 + R_2  \le \frac{1}{2}\log(1 + P_1 + a_{12}P_2)  +  \frac{1}{2}\log\left( 1 + \frac{P_2}{1 + a_{12}P_2} \right) \\[0.2cm]
   R_1 + R_2  \le \frac{1}{2}\log(1 + P_2 + a_{21}P_1)  +  \frac{1}{2}\log\left( 1 + \frac{P_1}{1 + a_{21}P_1} \right) \\[0.2cm]
   R_1 + R_2  \le \frac{1}{2}\log\left(1 + a_{12}P_2 + \frac{P_1}{1 + a_{21}P_1} \right)
   + \frac{1}{2}\log\left( 1 + a_{21}P_1 + \frac{P_2}{1 + a_{12}P_2} \right) \\[0.2cm]
   2R_1 + R_2 \le \frac{1}{2}\log(1 + P_1 + a_{12}P_2) + \frac{1}{2}\log\left(1 + \frac{P_1}{1 + a_{21}P_1} \right)\\[0.2cm]
   \qquad \qquad \quad + \frac{1}{2}\log\left(1 + a_{21}P_1 + \frac{P_2}{1 + a_{12}P_2} \right) \\[0.2cm]
   R_1 + 2R_2 \le \frac{1}{2}\log(1 + P_2 + a_{21}P_1) + \frac{1}{2}\log\left(1 + \frac{P_2}{1 + a_{12}P_2} \right)\\[0.2cm]
   \qquad \qquad \quad + \frac{1}{2}\log\left(1 + a_{12}P_2 + \frac{P_1}{1 + a_{21}P_1} \right)
   \end{array} \right\}.
\end{align}
The outer bound $\mathcal{R}_{\text o}$ has the property that if $(R_1, R_2) \in \mathcal{R}_{\text{o}}$ then
$(R_1 - \frac{1}{2}, R_2-\frac{1}{2}) \in \mathcal{R}_{\text{HK}}$
where $\mathcal{R}_{\text{HK}}$ denotes the Han-Kobayashi achievable rate region in \cite{Han81}
(see \cite[Remark~2]{TelatarTse07} and \cite[Section~6.7.2]{ElGamalKim_book}). Note that the "within
one bit" result in \cite{EtkinTseWang08} and \cite{TelatarTse07} is per complex dimension,
and it is replaced here by half a bit per dimension since all the random variables involved in the
calculations of the outer bound on the capacity region of a scalar GIC are real-valued
\cite[Theorem~6.6]{ElGamalKim_book}.
Consider the boundary of the outer bound $\mathcal{R}_{\text{o}}$ in \eqref{eq:outer-awgn}.
If one of the three inequality constraints on $R_1 + R_2$ is active in \eqref{eq:outer-awgn}
(this condition is first needed to be verified), then a point on the boundary of the rate region $\mathcal{R}_{\text{o}}$ that is dominated by one of these three inequality constraints satisfies the equality
\begin{align}
R_1 + R_2 & = \frac{1}{2} \, \min \Biggl\{\log(1 + P_1 + a_{12}P_2)  +  \log\left( 1 + \frac{P_2}{1 + a_{12}P_2} \right), \nonumber \\
& \qquad \qquad \quad \log(1 + P_2 + a_{21}P_1)  + \log\left( 1 + \frac{P_1}{1 + a_{21}P_1} \right), \nonumber \\
& \qquad \qquad \quad \log\left(1 + a_{12}P_2 + \frac{P_1}{1 + a_{21}P_1} \right)
+ \log\left( 1 + a_{21}P_1 + \frac{P_2}{1 + a_{12}P_2}\right) \Biggr\}.
\label{eq:maximal throughput that refers to the outer bound in Theorem 1 under a mild condition}
\end{align}
Since $(R_1 - \frac{1}{2}, R_2 - \frac{1}{2})$ is an achievable rate pair, then the sum-rate
is lower bounded by $R_1 + R_2 - 1$. It therefore follows from
\eqref{eq:maximal throughput that refers to the outer bound in Theorem 1 under a mild condition} that
\begin{align}
C_{\text{sum}} & \geq \frac{1}{2} \, \min \Biggl\{\log(1 + P_1 + a_{12}P_2)  +  \log\left( 1 + \frac{P_2}{1 + a_{12}P_2} \right), \nonumber \\[0.2cm]
& \qquad \qquad \quad \log(1 + P_2 + a_{21}P_1)  + \log\left( 1 + \frac{P_1}{1 + a_{21}P_1} \right), \nonumber \\
& \qquad \qquad \quad \log\left(1 + a_{12}P_2 + \frac{P_1}{1 + a_{21}P_1} \right)
+ \log\left( 1 + a_{21}P_1 + \frac{P_2}{1 + a_{12}P_2}\right) \Biggr\} - 1.
\label{eq:lower bound on the sum-rate - weak interference}
\end{align}
A combination of \eqref{eq:Delta},
\eqref{eq:upper bound on total throughput for the two corner points - weak interference} and
\eqref{eq:lower bound on the sum-rate - weak interference} leads to the following lower bound
on the excess rate for the sum-rate w.r.t. the corner points:
\begin{align}
\hspace*{-0.25cm} \Delta & \geq \frac{1}{2} \, \Biggl[ \min \Biggl\{\log(1 + P_1 + a_{12}P_2)  +  \log\left( 1 + \frac{P_2}{1 + a_{12}P_2} \right), \nonumber \\[0.2cm]
& \qquad \qquad \quad \log(1 + P_2 + a_{21}P_1)  + \log\left( 1 + \frac{P_1}{1 + a_{21}P_1} \right), \nonumber \\
& \qquad \qquad \quad \log\left(1 + a_{12}P_2 + \frac{P_1}{1 + a_{21}P_1} \right)
+ \log\left( 1 + a_{21}P_1 + \frac{P_2}{1 + a_{12}P_2}\right) \Biggr\} \nonumber \\
& \qquad - \min \Biggl\{ \max \Biggl\{\log(1+P_2+a_{21} P_1) + \log\left(1+\frac{P_1}{(1+a_{21}P_1)(1+a_{12}P_2)} \right),  \nonumber \\
& \qquad \qquad \qquad \qquad \log(1+P_1+a_{12} P_2) + \log\left(1+\frac{P_2}{(1+a_{21}P_1)(1+a_{12}P_2)} \right) \Biggr\}, \log(1+P_1+P_2) \Biggr\} \Biggr] - 1
\label{eq:lower bound on Delta - Gaussian IC with weak interference}
\end{align}
provided that there exists a rate-pair $(R_1, R_2)$ that is dominated by one of the
three inequality constraints on $R_1+R_2$ in \eqref{eq:outer-awgn}; as mentioned above,
this condition is first needed to be verified for validating both lower bounds
in \eqref{eq:lower bound on the sum-rate - weak interference} and
\eqref{eq:lower bound on Delta - Gaussian IC with weak interference}.
In the following, the lower bound on $\Delta$ is particularized
for a weak and symmetric GIC, and a sufficient condition is stated
for ensuring that the lower bounds in
\eqref{eq:lower bound on the sum-rate - weak interference} and
\eqref{eq:lower bound on Delta - Gaussian IC with weak interference}
hold for this channel. To this end, we state and prove the following lemma:
\begin{lemma}
{\em For a weak and symmetric two-user GIC with a common power constraint
on its inputs that satisfies $P \geq 2.551$, there exists a rate-pair
$(R_1, R_2)$ on the boundary of the outer bound $\mathcal{R}_{\text{o}}$ in
\eqref{eq:outer-awgn} that is dominated by one of the inequality constraints
on $R_1+R_2$ in $\mathcal{R}_{\text{o}}$.}
\label{lemma for a symmetric Gaussian IC}
\end{lemma}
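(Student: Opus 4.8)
The plan is to specialize the outer bound $\mathcal{R}_{\text{o}}$ of \eqref{eq:outer-awgn} to the symmetric channel, to locate on the boundary of the resulting polytope the rate pair that maximizes $R_1+R_2$, and to show that for $P\geq 2.551$ this point lies on a face cut out by one of the sum-rate inequalities. Putting $P_1=P_2=P$ and $a_{12}=a_{21}=a\in(0,1)$ in \eqref{eq:outer-awgn}, the seven facet inequalities collapse to $0\leq R_i\leq C$ with $C=\tfrac12\log(1+P)$; to the two sum-rate inequalities $R_1+R_2\leq S_A$ and $R_1+R_2\leq S_B$, where $S_A=\tfrac12\log\bigl(1+(1+a)P\bigr)+\tfrac12\log\bigl(1+\tfrac{P}{1+aP}\bigr)=\log\bigl(1+(1+a)P\bigr)-\tfrac12\log(1+aP)$ and $S_B=\log\bigl(1+aP+\tfrac{P}{1+aP}\bigr)=\log\bigl((1+aP)^2+P\bigr)-\log(1+aP)$; and to the pair $2R_1+R_2\leq T$, $R_1+2R_2\leq T$ with $T=S_A+\tfrac12 S_B$ (obtained by grouping the three logarithmic terms of the $2R_1+R_2$ bound). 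Writing $S:=\min\{S_A,S_B\}$, the symmetric outer bound is the convex polytope $\mathcal{R}_{\text{o}}=\{(R_1,R_2):0\leq R_i\leq C,\ R_1+R_2\leq S,\ 2R_1+R_2\leq T,\ R_1+2R_2\leq T\}$, and the lemma amounts to exhibiting a boundary point at which $R_1+R_2=S$.

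The heart of the proof is a one-line linear-programming observation. Since $\mathcal{R}_{\text{o}}$ and the functional $R_1+R_2$ are invariant under interchanging $R_1$ and $R_2$, and $\mathcal{R}_{\text{o}}$ is convex, $R_1+R_2$ attains its maximum over $\mathcal{R}_{\text{o}}$ at the diagonal point $(r^\star,r^\star)$ with $r^\star=\min\bigl\{C,\ \tfrac12 S,\ \tfrac13 T\bigr\}$. I would then identify $r^\star$ from two facts. First, since $T=S_A+\tfrac12 S_B$, the quantity $\tfrac23 T=\tfrac23 S_A+\tfrac13 S_B$ is a convex combination of $S_A$ and $S_B$, so $\tfrac23 T\geq\min\{S_A,S_B\}=S$, i.e. $\tfrac13 T\geq\tfrac12 S$; hence $\tfrac13 T$ never strictly beats $\tfrac12 S$ in the minimum. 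Second — and this is where the hypothesis enters — $P\geq 2.551$ forces $T<3C$ for every $a\in(0,1)$, equivalently $(C,C)$ violates $2R_1+R_2\leq T$, so $\tfrac13 T<C$. Together these give $\tfrac12 S\leq\tfrac13 T<C$, hence $r^\star=\tfrac12 S$, so $(S/2,S/2)$ is a boundary point of $\mathcal{R}_{\text{o}}$ at which $R_1+R_2=S$: the smaller of the two sum-rate inequalities is active there, and this is the rate pair the lemma asks for.

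It remains to prove that $g(a):=3C-T(a)>0$ for all $a\in(0,1)$ whenever $P\geq 2.551$, which I expect to be the main obstacle. Using $T(a)=\log\bigl(1+(1+a)P\bigr)+\tfrac12\log\bigl((1+aP)^2+P\bigr)-\log(1+aP)$, one checks the endpoints: as $a\to 0^+$ one has $T\to\tfrac32\log(1+P)=3C$ while $\tfrac{d}{da}(3C-T)\big|_{a=0}$ has the sign of $P-1$, so $g>0$ just to the right of $0$ for $P>1$; and at $a=1$ the inequality $T<3C$ becomes, after exponentiating, $(1+2P)^2(1+3P+P^2)<(1+P)^5$, i.e. $P^4+P^3-6P^2-7P-2>0$, whose largest real root is $\approx 2.551$ — this is the origin of the stated constant, and the endpoint inequality holds exactly for $P>2.551$ (with equality at the threshold). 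To pass from the two endpoints to the whole interval one must show $g$ has no interior local minimum on $(0,1)$, i.e. that every interior critical point of $g$ is a local maximum; this requires writing out $T'(a)$ and analyzing its sign across $(0,1)$. The difficulty is that $T$ is a sum of logarithms of rational functions of $a$ with no evident global convexity or concavity, so this reduces to a finite but somewhat tedious calculus computation; once it is carried out, $\min_{a\in[0,1]}g(a)$ is attained at an endpoint, is therefore $\geq 0$ for $P\geq 2.551$, and the lemma follows.
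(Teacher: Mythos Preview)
Your approach is correct and essentially identical to the paper's: both reduce the lemma to showing $g(a)=3C-T(a)>0$ on $(0,1)$, via the same observation that $\tfrac{2}{3}T=\tfrac{2}{3}S_A+\tfrac{1}{3}S_B\geq \min\{S_A,S_B\}$ and the same endpoint polynomial $P^4+P^3-6P^2-7P-2>0$. The one step you leave as a ``tedious calculus computation'' the paper dispatches cleanly by proving that $f_P(a):=g(a)/3$ is \emph{concave} on $[0,1]$ whenever $P\geq 0.680$ (checked by showing $f_P'''>0$ on $[0,1]$, so $f_P''\leq f_P''(1)\leq 0$); with $f_P(0)=0$ and $f_P(1)\geq 0$ for $P\geq 2.551$, concavity immediately gives $f_P>0$ on $(0,1)$, avoiding any critical-point analysis.
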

\begin{proof}
Consider the straight lines that correspond to the inequality constraints on
$2R_1+R_2$ and $R_1+2R_2$ in \eqref{eq:outer-awgn}. For a weak and symmetric
two-user GIC (where $a_{12}=a_{21}=a$ with $0<a<1$, and $P_1 = P_2 \triangleq P$),
this corresponds to
\begin{align*}
& 2 R_1 + R_2 = \frac{1}{2} \left[ \log(1+P+aP) + \log\left(1+\frac{P}{1+aP}\right)
+ \log\left(1+aP+\frac{P}{1+aP}\right) \right], \nonumber \\[0.1cm]
& R_1 + 2 R_2 = \frac{1}{2} \left[ \log(1+P+aP) + \log\left(1+\frac{P}{1+aP}\right)
+ \log\left(1+aP+\frac{P}{1+aP}\right) \right].
\end{align*}
These two straight lines intersect at a point $(R_1, R_2)$ where
\begin{align}
R_1=R_2 \triangleq R = \frac{1}{6} \left[ \log(1+P+aP) + \log\left(1+\frac{P}{1+aP}\right)
+ \log\left(1+aP+\frac{P}{1+aP}\right) \right]
\label{eq:R}
\end{align}
and the corresponding value of $R_1+R_2$ at this point is given by
\begin{align}
& R_1 + R_2 = \frac{1}{3} \left[ \log(1+P+aP) + \log\left(1+\frac{P}{1+aP}\right)
+ \log\left(1+aP+\frac{P}{1+aP}\right) \right].
\label{eq:value of R-sum at the intersecting point}
\end{align}
For the considered GIC, the inequality constraints on $R_1+R_2$ in the outer bound
\eqref{eq:outer-awgn} are given by
\begin{align}
& R_1 + R_2 \leq \frac{1}{2} \, \left[\log(1+P+aP) + \log\left(1+\frac{P}{1+aP}\right) \right],
\label{eq:1st inequality for R1+R2} \\
& R_1 + R_2 \leq \log\left(1+aP+\frac{P}{1+aP}\right).
\label{eq:2nd inequality for R1+R2}
\end{align}
The right-hand side of \eqref{eq:value of R-sum at the intersecting point}
is equal to the weighted average of the right-hand sides of
\eqref{eq:1st inequality for R1+R2} and \eqref{eq:2nd inequality for R1+R2}
with weights $\frac{2}{3}$ and $\frac{1}{3}$, respectively. Hence, it follows
that one of the two inequality constraints on $R_1+R_2$ in
\eqref{eq:1st inequality for R1+R2} and \eqref{eq:2nd inequality for R1+R2} should be
active in the determination of the boundary of the outer bound in \eqref{eq:outer-awgn},
provided that the point $(R,R)$ satisfies the condition $R<\frac{1}{2} \, \log(1+P)$
for every $0 < a < 1$ (see the first and second inequality constraints on $R_1$ and
$R_2$, respectively, in \eqref{eq:outer-awgn}). By showing this, it implies
that the point $(R,R)$ is outside the rate region $\mathcal{R}_{\text{o}}$ in
\eqref{eq:outer-awgn}. Consequently, it ensures the existence of a point $(R_1, R_2)$,
located at the boundary of the rate region in \eqref{eq:outer-awgn}, that is
dominated by one of the inequality constraints on $R_1+R_2$ in
\eqref{eq:1st inequality for R1+R2} and \eqref{eq:2nd inequality for R1+R2}.
In order to verify that indeed the condition $R<\frac{1}{2} \, \log(1+P)$ holds
for every $0 < a < 1$, where $R$ is given in \eqref{eq:R}, let
\begin{align}
f_P(a) \triangleq \frac{1}{2} \, \log(1+P) - \frac{1}{6} \left[ \log(1+P+aP) + \log\left(1+\frac{P}{1+aP}\right)
+ \log\left(1+aP+\frac{P}{1+aP}\right) \right], \quad \forall \, a \in [0,1]
\label{eq: function f_P}
\end{align}
where $P>0$ is arbitrary; the satisfiability of this condition requires that
$f_P$ is positive over the interval [0,1].
The function $f_P$ satisfies $f_P(0)=0$, and it is concave over the interval
[0,1] if and only if $P \geq 0.680$ (one can show that this is a necessary and
sufficient condition such that $f_P''(1) \leq 0$; furthermore, under the latter
condition, the third derivative of $f_P$ is also positive on [0,1], which implies that
$f_P'' \leq 0$ over this interval, so the function $f_P$ is concave).
This implies that $f_P(a) > 0$
for all $a \in (0,1)$ if $f_P(1) \geq 0$ and $P \geq 0.680$.
Straightforward algebra shows that $f_P(1) \geq 0$ if and only if
$P^4 + P^3 - 6P^2 - 7P - 2 \geq 0$, which is satisfied if and only
if $P \geq 2.55003$ (the other solutions of this inequality are
infeasible for $P$ since it is real and positive).
This completes the proof of the lemma.
\end{proof}

Lemma~\ref{lemma for a symmetric Gaussian IC} yields that the lower bound on the sum-rate in
\eqref{eq:lower bound on the sum-rate - weak interference} is satisfied for a weak and symmetric
GIC if $P \geq 2.551$. Consequently, also the lower bound on $\Delta$ in
\eqref{eq:lower bound on Delta - Gaussian IC with weak interference}
holds for a weak and symmetric GIC under the same condition on $P$. In this case, the lower bound in
\eqref{eq:lower bound on Delta - Gaussian IC with weak interference} is simplified to
\begin{align}
\Delta &= \Delta(P,a) \nonumber \\
& \geq \frac{1}{2} \, \Biggl[ \min \Biggl\{\log\bigl(1 + (a+1)P\bigr) +  \log\left( 1 + \frac{P}{1 + aP} \right),
\, 2 \log\left(1 + aP + \frac{P}{1 + aP} \right) \Biggr\} \nonumber \\
& \qquad - \min \left\{ \log\bigl(1+(a+1)P\bigr) + \log\left(1+\frac{P}{(1+aP)^2} \right),
\log(1+2P) \right\} \Biggr] - 1.
\label{eq:lower bound on Delta - symmetric Gaussian IC with weak interference}
\end{align}
In the following, we consider the limit of the lower bound on $\Delta$ in the asymptotic case
where we let $P$ tend to infinity, while $a \in (0,1)$ is kept fixed. In this case, we have from
the lower bound in \eqref{eq:lower bound on Delta - symmetric Gaussian IC with weak interference}
\begin{align}
& \lim_{P \rightarrow \infty} \Delta(P,a) \nonumber \\
& \geq \frac{1}{2} \, \lim_{P \rightarrow \infty} \Biggl[ \min \Biggl\{\log\bigl(1 + (a+1)P\bigr) +  \log\left( 1 + \frac{P}{1 + aP} \right),
\, \log \biggl( \left(1 + aP + \frac{P}{1 + aP} \right)^2 \biggr) \Biggr\} \nonumber \\
& \quad \qquad \qquad - \min \left\{ \log\bigl(1+(a+1)P\bigr) + \log\left(1+\frac{P}{(1+aP)^2} \right),
\log(1+2P) \right\} \Biggr] - 1 \nonumber \\
& \stackrel{(\text{a})}{=} \frac{1}{2} \, \lim_{P \rightarrow \infty} \Biggl[ \log\bigl(1 + (a+1)P\bigr)
+  \log\left( 1 + \frac{P}{1 + aP} \right)
- \left( \log\bigl(1+(a+1)P\bigr) + \log\left(1+\frac{P}{(1+aP)^2}\right) \right) \Biggr] - 1 \nonumber \\
& = \frac{1}{2} \, \lim_{P \rightarrow \infty} \left[ \log\left( 1 + \frac{P}{1 + aP} \right)
- \log\left(1+\frac{P}{(1+aP)^2}\right) \right] - 1 \nonumber \\[0.1cm]
& = \frac{1}{2} \, \log\left(1+\frac{1}{a}\right) - 1.
\label{eq:lower bound on Delta - large P, symmetric Gaussian IC with weak interference}
\end{align}
Equality~(a) holds since, for large enough $P$,
\begin{align*}
& \log\bigl(1+(a+1)P\bigr) + \log\left(1+\frac{P}{1+aP}\right)
\approx \log\bigl((a+1)P\bigr) + \log\left(1+\frac{1}{a}\right)
= \log \left(\frac{(a+1)^2 P}{a} \right), \\
& 2 \log\left(1 + aP + \frac{P}{1+aP}\right) \approx \log(a^2 P^2), \quad
\log\bigl(1+(a+1)P\bigr) + \log\left(1+\frac{P}{(1+aP)^2} \right) \approx \log\bigl((a+1)P\bigr)
\end{align*}
so, if $a \in (0,1)$ and $P$ is large enough, each minimization of the pair of terms
in the two lines before equality~(a) is equal to its first term.

For a weak and symmetric GIC,
a comparison of the asymptotic upper and lower bounds on $\Delta$ in
\eqref{eq:upper bound on Delta - large P, symmetric Gaussian IC with weak interference}
and \eqref{eq:lower bound on Delta - large P, symmetric Gaussian IC with weak interference}
yields that these two asymptotic bounds differ by at most~1 bit per channel use; this
holds irrespectively of the cross-link gain $a \in (0,1)$.
Note that the upper bound is tight for $a$ close to~1, and also both asymptotic bounds
scale like $\frac{1}{2} \, \log\left(\frac{1}{a}\right)$ for small values of $a$ (so, they
tend to infinity as $a \rightarrow 0$).

\subsection{An Analogous Measure to the Generalized Degrees of Freedom and its Implications}
\label{subsection:An analogous measure to the generalized degrees of freedom}
This section is focused on the model of a two-user symmetric GIC, and it provides
an asymptotic analysis of the excess rate for the sum-rate w.r.t. the corner points of its
capacity region. The asymptotic analysis of this excess rate $(\Delta)$ is analogous to
the study of the generalized degrees of freedom where the SNR and INR
scalings are coupled such that
\begin{align}
\frac{\log(\text{INR})}{\log(\text{SNR})} = \alpha \geq 0.
\label{eq:scaling of SNR and INR}
\end{align}
The main results of this section is a derivation of an exact asymptotic characterization of
$\Delta$ for the whole range of $\alpha$ (see Theorem~\ref{theorem:closed-form expression for delta}),
and a demonstration that the closed-form expressions for the upper and lower bounds on
$\Delta$ in Sections~\ref{subsection: An upper bound on delta for the corner points of the capacity region}
and~\ref{subsection: A lower bound on delta for the corner points of the capacity region}
are asymptotically tight in the sense of achieving the exact asymptotic characterization of
$\Delta$ (see Theorem~\ref{theorem: asymptotic tightness of the bounds on the excess-rate}).
Implications of the asymptotic analysis and the main results of this section are further
discussed in the following.

Consider a two-user symmetric GIC whose cross-link gain $a$ scales like
$P^{\alpha-1}$ for some fixed value of $\alpha \geq 0$. For this GIC, the generalized
degrees of freedom (GDOF) is defined as the asymptotic limit of the normalized sum-rate $\frac{C_{\text{sum}}(P,P^{\alpha-1})}{\log P}$ when $P \rightarrow \infty$.
This GDOF refers to the case where the SNR $(P)$ tends to infinity, and
the interference to noise ratio $(\text{INR} = aP)$ scales according to
\eqref{eq:scaling of SNR and INR} while $\alpha \geq 0$ is kept fixed.
The GDOF of a two-user symmetric GIC (without feedback) is defined as follows:
\begin{align}
d(\alpha) &\triangleq \lim_{P \rightarrow \infty} \frac{C_{\text{sum}}(P,P^{\alpha-1})}{\log P}
\label{eq:definition of GDOF}
\end{align}
and this limit exists for every $\alpha \geq 0$ (see \cite[Section~3.G]{EtkinTseWang08}).

For large $P$, let us consider in an analogous way the asymptotic scaling of the
normalized excess rate for the sum-rate w.r.t. the corner points of the capacity region.
To this end, we study the asymptotic limit of the ratio $\frac{\Delta(P,P^{\alpha-1})}{\log P}$
for a fixed $\alpha \geq 0$ when $P$ tends to infinity. Similarly to \eqref{eq:definition of GDOF},
the denominator of this ratio is equal to the asymptotic sum-rate of two parallel AWGN channels with no
interference.
However, in the latter expression, the excess rate for the sum-rate w.r.t. the corner points is replacing
the sum-rate that appears in the numerator on the right-hand side of \eqref{eq:definition of GDOF}.
Correspondingly, for an arbitrary $\alpha \geq 0$, let us define
\begin{align}
\delta(\alpha) \triangleq \lim_{P \rightarrow \infty} \frac{\Delta(P, P^{\alpha-1})}{\log P}.
\label{eq:delta - analogous definition to GDOF}
\end{align}
provided that this limit exists. In the following, we demonstrate the existence of this limit
and provide a closed-form expression for $\delta$.

\begin{theorem}
{\em The limit in \eqref{eq:delta - analogous definition to GDOF} exists
for every $\alpha \geq 0$, and the function $\delta$ admits the following
closed-form expression:
\begin{align}
\delta(\alpha)
  &= \left\{ \begin{array}{cl}
  \left| \frac{1}{2}-\alpha \right|, & \mbox{if $0 \leq \alpha < \frac{2}{3}$} \\[0.1cm]
  \frac{1-\alpha}{2}, & \mbox{if $\frac{2}{3} \leq \alpha < 1$} \\[0.1cm]
  0, & \mbox{if $\alpha \geq 1$}
  \end{array} \right. .   \label{eq:closed-form expression for delta}
\end{align}}
\label{theorem:closed-form expression for delta}
\end{theorem}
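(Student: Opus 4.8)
The plan is to compute the limit $\delta(\alpha)=\lim_{P\to\infty}\frac{\Delta(P,P^{\alpha-1})}{\log P}$ by sandwiching $\Delta(P,P^{\alpha-1})$ between the closed-form upper bound in \eqref{eq:upper bound on Delta - symmetric Gaussian IC with weak interference} and the closed-form lower bound in \eqref{eq:lower bound on Delta - symmetric Gaussian IC with weak interference}, and showing that both bounds, after division by $\log P$, converge to the same piecewise-linear function as $P\to\infty$ with $a=P^{\alpha-1}$. Since $\mathrm{INR}=aP=P^{\alpha}$, every logarithmic term appearing in those two bounds is (to leading order) a constant multiple of $\log P$, so the whole computation reduces to tracking the exponents of $P$ inside each $\log(\cdot)$ and then taking the appropriate $\min$/$\max$. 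Concretely, I would substitute $a=P^{\alpha-1}$ into the terms $1+(a+1)P\asymp P$, $1+\frac{P}{1+aP}\asymp P^{\,1-\min\{1,\alpha\}}=P^{(1-\alpha)^+}$, $1+aP+\frac{P}{1+aP}\asymp P^{\max\{\alpha,\,1-\alpha,\,0\}}$ (the $\frac{P}{1+aP}$ piece contributes exponent $1-\alpha$ when $\alpha<1$ and is negligible when $\alpha\ge1$), and $1+\frac{P}{(1+aP)^2}\asymp P^{\max\{0,\,1-2\alpha\}}$, and likewise $1+2P\asymp P$. Dividing each bound by $\log P$ then gives an expression in $\alpha$ built from $\min$ and $\max$ of affine functions, which I evaluate on the three regimes $0\le\alpha<\tfrac12$, $\tfrac12\le\alpha<1$ (further split at $\tfrac23$), and $\alpha\ge1$.

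The upper bound $\frac{\Delta}{\log P}\le \frac12\min\bigl\{1-\alpha,\;(1-\alpha)^++\max\{0,1-2\alpha\}^{-}\text{-type terms}\bigr\}$ — more precisely, from \eqref{eq:upper bound on Delta - symmetric Gaussian IC with weak interference} the first term contributes exponent $\tfrac12\bigl((1)-(\alpha)\bigr)$ giving $\tfrac{1-\alpha}{2}$ when $\alpha<1$ and $0$ when $\alpha\ge1$ (since then the numerator $1+P$ and denominator $1+aP$ both have exponent... one must be careful: $\log\frac{1+P}{1+aP}\to\log\frac1{a}$ only when $a\to0$ with $P$ fixed, but here $a=P^{\alpha-1}$, so for $\alpha<1$ this is $(1-\alpha)\log P$ and for $\alpha\ge1$ it is $\le 0$, hence the clipping to $0$ for $\alpha\ge1$). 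The second term inside the $\min$ of \eqref{eq:upper bound on Delta - symmetric Gaussian IC with weak interference} similarly has leading exponent $\max\{0,1-2\alpha\}$ coming from $\frac{P}{(1+aP)^2}$ together with the contribution $\frac{aP[P+(1+aP)^2]}{(1+aP)(1+(a+1)P)}$ whose exponent I compute as $\max\{\alpha,\,1-\alpha\}+1-\max\{1,\alpha\}-1 = \max\{\alpha,1-\alpha\}-\max\{0,\alpha-1\}$, which equals $1-\alpha$ for $\alpha\le\tfrac12$, $\alpha$ for $\tfrac12\le\alpha\le1$, and $1$ for $\alpha\ge1$; after taking the $\min$ of the two terms the upper bound on $\frac{\Delta}{\log P}$ comes out to $|\tfrac12-\alpha|$ for $\alpha<\tfrac23$, $\tfrac{1-\alpha}{2}$ for $\tfrac23\le\alpha<1$, and $0$ for $\alpha\ge1$. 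For the lower bound I repeat the exponent bookkeeping in \eqref{eq:lower bound on Delta - symmetric Gaussian IC with weak interference}; the crucial observation (already noted in the text before equality~(a) of \eqref{eq:lower bound on Delta - large P, symmetric Gaussian IC with weak interference}) is which term wins each $\min$, and one checks that the additive $-1$ disappears after dividing by $\log P$, so the lower bound converges to the \emph{same} piecewise expression. Matching upper and lower limits establishes both existence of $\delta(\alpha)$ and the formula \eqref{eq:closed-form expression for delta}; I would also note continuity at the breakpoints $\alpha=\tfrac12,\tfrac23,1$ as a consistency check.

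The main obstacle is bookkeeping, not conceptual: I must be careful that the $\min$/$\max$ over several $\log$-terms commutes with the $P\to\infty$ limit (it does, since each argument is eventually monotone in the dominant power of $P$, so only the comparison of exponents — and, on the measure-zero set where exponents tie, of constants — matters, and ties contribute nothing to the normalized limit), and I must verify that Lemma~\ref{lemma for a symmetric Gaussian IC} applies, i.e. that $P=P\to\infty$ certainly exceeds the threshold $2.551$, so the lower bound \eqref{eq:lower bound on Delta - symmetric Gaussian IC with weak interference} is legitimately in force throughout the limit. A secondary subtlety is the regime $\alpha\ge1$: there the channel is no longer weak for large $P$ (since $aP=P^\alpha$ but $a=P^{\alpha-1}\ge1$), so strictly one should either restrict the symmetric-GIC bounds to $\alpha<1$ and invoke $\Delta=0$ for strong/mixed GICs directly (as stated in the opening of Section~\ref{section: On the Sub-Optimality of the Corner Points for a Gaussian IC with Weak Interference}) to cover $\alpha\ge1$, or take the limit along the weak regime and note continuity; I would handle $\alpha\ge1$ by the former route, citing $\Delta=0$ for non-weak GICs. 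Once these points are pinned down, the identification of the three affine pieces is immediate from the exponent arithmetic above.
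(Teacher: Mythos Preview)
Your approach is correct but differs from the paper's. You sandwich $\Delta(P,P^{\alpha-1})$ between the closed-form bounds \eqref{eq:upper bound on Delta - symmetric Gaussian IC with weak interference} and \eqref{eq:lower bound on Delta - symmetric Gaussian IC with weak interference} and do exponent bookkeeping on each term; this is essentially the computation carried out in the paper's proof of Theorem~\ref{theorem: asymptotic tightness of the bounds on the excess-rate}, not of Theorem~\ref{theorem:closed-form expression for delta}. The paper's own proof of Theorem~\ref{theorem:closed-form expression for delta} is shorter: for $\alpha\in[0,1)$ it uses only the crude sandwich $\tfrac12\log(1+P)\le \max\{R_1+R_2:\text{corner point}\}\le\tfrac12\log(1+2P)$ (from \eqref{eq:bound1 on the corner points for weak interference}--\eqref{eq:bound2 on the corner points for weak interference}), so that after dividing by $\log P$ the corner-point contribution is exactly $\tfrac12$ and one obtains directly $\delta(\alpha)=d(\alpha)-\tfrac12$, where $d(\alpha)$ is the already-known GDOF \eqref{eq:closed-form expression for the GDOF}. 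The case $\alpha\ge1$ is handled identically in both arguments. Your route is more laborious but self-contained (it does not invoke the GDOF formula as a black box) and, as a side benefit, simultaneously yields Theorem~\ref{theorem: asymptotic tightness of the bounds on the excess-rate}; the paper's route is cleaner and makes the identity $\delta(\alpha)=d(\alpha)-\tfrac12$ for $\alpha\in(0,1)$ transparent.
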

\begin{proof}
If $\alpha \geq 1$ and $P \geq 1$, the cross-link gain is
$a = P^{\alpha-1} \geq 1$, and the channel is a strong and symmetric
two-user GIC.
The capacity region of a strong two-user GIC is equal to the
intersection of the capacity regions of the two Gaussian
multiple-access channels from the two transmitters to each one of
the receivers (see \cite[Theorem~5.2]{Han81} and \cite{Sato81}).
The sum-rate of this GIC is therefore equal to the total rate $(R_1+R_2)$
at each of the corner points of its capacity region. Hence, if
$\alpha \geq 1$ and $P > 1$ then $\Delta(P, P^{\alpha-1}) = 0$,
and \eqref{eq:delta - analogous definition to GDOF} implies that
\begin{align}
\delta(\alpha) = 0, \quad \forall \, \alpha \geq 1.
\label{eq:delta is zero}
\end{align}

For a symmetric two-user GIC with an input power constraint $P > 1$ and
an interference level $\alpha \in [0,1)$, the cross-link gain is
$a = P^{\alpha-1} < 1$. This refers to a weak and symmetric two-user GIC.
From Theorem~\ref{theorem: bounds on the corner points of a GIC with weak interference}
(see \eqref{eq:bound1 on the corner points for weak interference} and
\eqref{eq:bound2 on the corner points for weak interference}),
the bounds on the corner points of the capacity region of a weak and symmetric two-user GIC
imply that the maximal total rate at these corner points satisfies the inequality
\begin{align}
\frac{1}{2} \, \log(1+P) \leq
\max \bigl\{R_1+R_2 \colon (R_1, R_2) \; \text{is a corner point} \bigr\}
\leq \frac{1}{2} \, \log(1+2P).
\label{eq:bounds on the total rate at the corner points of a symmetric GIC}
\end{align}
From \eqref{eq:Delta} and
\eqref{eq:bounds on the total rate at the corner points of a symmetric GIC}, it follows
that for $P > 1$ and $\alpha \in [0,1)$
\begin{align}
C_{\text{sum}}(P,P^{\alpha-1}) - \frac{1}{2} \, \log(1+2P) \leq \Delta(P, P^{\alpha-1})
\leq C_{\text{sum}}(P,P^{\alpha-1}) - \frac{1}{2} \, \log(1+P).
\label{eq:bounds on the excess rate for a symmetric GIC}
\end{align}
Consequently, for $\alpha \in (0,1)$, a division by $\log P$ of the three sides
of the inequality in \eqref{eq:bounds on the excess rate for a symmetric GIC}
and a calculation of the limit as $P \rightarrow \infty$ gives that (see
\eqref{eq:definition of GDOF} and \eqref{eq:delta - analogous definition to GDOF})
\begin{align}
\delta(\alpha) = d(\alpha) - \frac{1}{2}, \quad \forall \, \alpha \in (0, 1).
\label{eq:connection between delta and GDOF for weak interference}
\end{align}
The limit in \eqref{eq:definition of GDOF} for the GDOF of a two-user symmetric GIC
(without feedback) exists, and it admits the following closed-form expression
(see \cite[Theorem~2]{EtkinTseWang08}):
\begin{align}
d(\alpha)
&= \min \Bigl\{ 1, \max\left\{\frac{\alpha}{2}, 1-\frac{\alpha}{2}\right\},
\max\left\{\alpha,1-\alpha\right\} \Bigr\} \nonumber \\[0.1cm]
  &= \left\{ \begin{array}{cl}
  1-\alpha, & \mbox{if $0 \leq \alpha < \frac{1}{2}$} \\[0.1cm]
  \alpha, & \mbox{if $\frac{1}{2} \leq \alpha < \frac{2}{3}$} \\[0.1cm]
  1-\frac{\alpha}{2}, & \mbox{if $\frac{2}{3} \leq \alpha < 1$} \\[0.1cm]
  \frac{\alpha}{2}, & \mbox{if $1 \leq \alpha < 2$} \\[0.1cm]
  1, & \mbox{if $\alpha \geq 2$}
  \end{array} \right. .   \label{eq:closed-form expression for the GDOF}
\end{align}
A combination of \eqref{eq:delta is zero},
\eqref{eq:connection between delta and GDOF for weak interference}
and \eqref{eq:closed-form expression for the GDOF} proves the closed-form
expression for $\delta$ in \eqref{eq:closed-form expression for delta}.
\end{proof}

\begin{figure}[here!]
\begin{center}
\epsfig{file=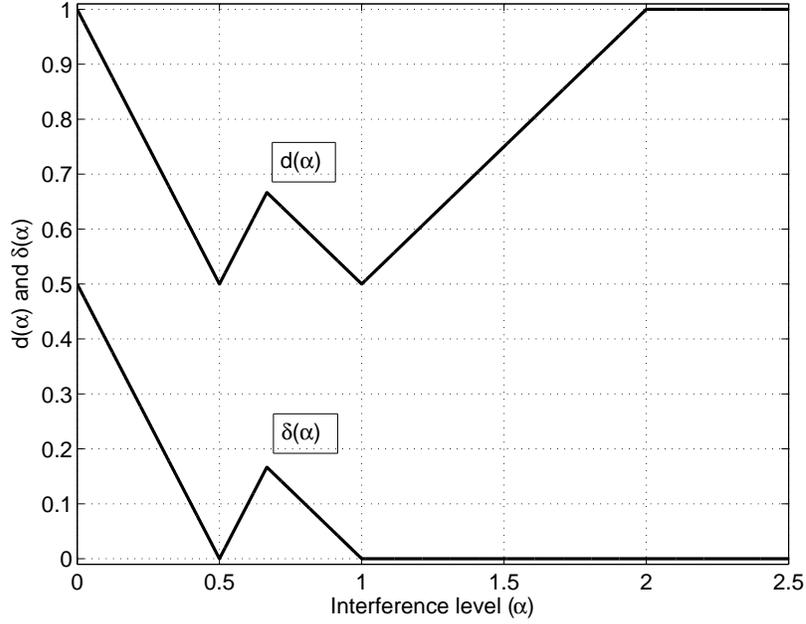,scale=0.6}
\caption{\label{Figure:GDOF_and_delta} A comparison of the generalized degrees of
freedom (GDOF) in \eqref{eq:closed-form expression for the GDOF}, and the exact asymptotic
characterization of $\delta$ in \eqref{eq:delta - analogous definition to GDOF} (see
\eqref{eq:closed-form expression for delta}) as a function of the non-negative interference
level $\alpha$.}
\end{center}
\end{figure}

Figure~\ref{Figure:GDOF_and_delta} provides a comparison of the GDOF and the function $\delta$
for an interference level $\alpha$ (i.e., the cross-link gain is $a=P^{\alpha-1}$).
Equation \eqref{eq:connection between delta and GDOF for weak interference} shows that, for
an interference level $\alpha \in [0,1]$, the difference between the GDOF (denoted here by
$d(\alpha)$) and $\delta(\alpha)$ is half a bit per channel use (see Figure~\ref{Figure:GDOF_and_delta}).
In light of the closed-form expression of $\delta$ in \eqref{eq:closed-form expression for delta},
the asymptotic tightness of the bounds in
\eqref{eq:upper bound on Delta - symmetric Gaussian IC with weak interference} and
\eqref{eq:lower bound on Delta - symmetric Gaussian IC with weak interference} is
demonstrated in the following:

\begin{theorem}
{\em Consider a weak and symmetric two-user GIC where the SNR and INR scalings are coupled according to
\eqref{eq:scaling of SNR and INR}.
Then, the upper and lower bounds on the excess rate $\Delta$
in \eqref{eq:upper bound on Delta - symmetric Gaussian IC with weak interference} and
\eqref{eq:lower bound on Delta - symmetric Gaussian IC with weak interference}
are asymptotically tight in the sense that, in the limit where $P \rightarrow \infty$,
the normalization of these bounds by $\log P$ tend to $\delta$ in
\eqref{eq:delta - analogous definition to GDOF} and
\eqref{eq:closed-form expression for delta}.}
\label{theorem: asymptotic tightness of the bounds on the excess-rate}
\end{theorem}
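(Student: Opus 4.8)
The plan is to compute the limits, as $P \to \infty$ with $a = P^{\alpha-1}$, of the normalized upper and lower bounds on $\Delta(P,a)$ that were derived in \eqref{eq:upper bound on Delta - symmetric Gaussian IC with weak interference} and \eqref{eq:lower bound on Delta - symmetric Gaussian IC with weak interference}, and to verify that both equal $\delta(\alpha)$ as given in \eqref{eq:closed-form expression for delta}. Since Theorem~\ref{theorem:closed-form expression for delta} already establishes the existence of $\delta(\alpha)$ and its closed form, the task here is purely to show that the two explicit closed-form bounds are sandwiched correctly asymptotically; this is essentially a GDOF-style exponent bookkeeping exercise. First I would restrict to $\alpha \in (0,1)$ (the weak regime), since for $\alpha \geq 1$ the channel is strong, $\Delta = 0$ identically, and both bounds trivially vanish; the boundary point $\alpha = 0$ can be handled by a direct check or by continuity.

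For the upper bound, I would start from \eqref{eq:upper bound on Delta - symmetric Gaussian IC with weak interference}, substitute $a = P^{\alpha-1}$, and track the dominant powers of $P$ inside each logarithm. The first term $\tfrac12 \log\!\bigl(\tfrac{1+P}{1+aP}\bigr)$ behaves like $\tfrac12 (1-\alpha)\log P$, since $aP = P^\alpha$. The second term $\tfrac12 \log\!\bigl(1 + \tfrac{P}{(1+aP)^2} + \tfrac{aP[P+(1+aP)^2]}{(1+aP)(1+(a+1)P)}\bigr)$ requires identifying which of the competing monomials $P/(1+aP)^2 \sim P^{1-2\alpha}$ and $\tfrac{aP \cdot P}{(1+aP)P} \sim P^{\alpha-1} \cdot \tfrac{P}{1} = $ (careful exponent accounting, distinguishing $aP \lessgtr 1$, equivalently $\alpha \lessgtr 0$, and $(1+aP)^2 \lessgtr P$) dominates; after dividing by $\log P$ and letting $P \to \infty$, the $\min$ of the two normalized terms should collapse to $\bigl|\tfrac12 - \alpha\bigr|$ for $\alpha < \tfrac23$ and to $\tfrac{1-\alpha}{2}$ for $\tfrac23 \leq \alpha < 1$, matching \eqref{eq:closed-form expression for delta}. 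For the lower bound, I would do the analogous computation on \eqref{eq:lower bound on Delta - symmetric Gaussian IC with weak interference}: the additive constant $-1$ is killed by the division by $\log P$, and the reasoning already carried out in \eqref{eq:lower bound on Delta - large P, symmetric Gaussian IC with weak interference} identifies which branch of each $\min$ is active for large $P$; I would simply redo that identification with $a = P^{\alpha-1}$ rather than $a$ fixed, extract the leading exponent, and confirm it again gives $\delta(\alpha)$.

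The main obstacle is the case analysis in the exponent bookkeeping: the various terms $1+P$, $1+aP = 1+P^\alpha$, $(1+aP)^2$, $1+(a+1)P$, and the ratios built from them have leading exponents that cross over at $\alpha = 0$, $\alpha = \tfrac12$, and $\alpha = \tfrac23$, so one must carefully partition $[0,1)$ at these breakpoints and, on each subinterval, determine which summand inside each logarithm and which branch of each $\min$ is asymptotically dominant. Once the correct branch is pinned down on each piece, the limit of (that branch)$/\log P$ is immediate. Having shown the normalized upper bound and the normalized lower bound both tend to the same piecewise-linear function, and that this function coincides with $\delta(\alpha)$ from \eqref{eq:closed-form expression for delta}, the squeeze theorem yields the claim. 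I would close by remarking that the sufficient condition $P \geq 2.551$ of Lemma~\ref{lemma for a symmetric Gaussian IC}, needed to validate the lower bound \eqref{eq:lower bound on Delta - symmetric Gaussian IC with weak interference}, is automatically satisfied in the limit $P \to \infty$, so the lower bound is legitimately available throughout the asymptotic regime.
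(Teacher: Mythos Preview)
Your proposal is correct and follows essentially the same approach as the paper: substitute $a=P^{\alpha-1}$ into the closed-form bounds, extract the leading exponent of each logarithm, and verify that both normalized limits equal $\delta(\alpha)$. Two small remarks: first, the paper works with the \emph{first} line of \eqref{eq:upper bound on Delta - symmetric Gaussian IC with weak interference} (the unsimplified $\min\{\cdot,\cdot\}-\log(1+(1+a)P)$ form) rather than the simplified second line you plan to use, which spares you the delicate accounting on the term $\tfrac{aP[P+(1+aP)^2]}{(1+aP)(1+(a+1)P)}$ and collapses the computation directly to $\tfrac12\bigl[\min\{2-\alpha,\,2\max\{\alpha,1-\alpha\}\}-1\bigr]$; second, your closing appeal to the squeeze theorem is unnecessary---the theorem's claim is exactly that each normalized bound tends to $\delta(\alpha)$, and once you have shown that, you are done (the squeezing of $\Delta$ itself is the content of Theorem~\ref{theorem:closed-form expression for delta}, not of this one).
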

\begin{proof}
Substituting $a=P^{\alpha-1}$ into the upper bound on $\Delta(P,a)$ in
\eqref{eq:upper bound on Delta - symmetric Gaussian IC with weak interference}
gives that, for $P > 1$ and $\alpha \in [0,1)$,
\begin{align}
& \Delta(P,P^{\alpha-1}) \nonumber \\
& \leq \frac{1}{2} \, \Biggl[ \min \biggl\{ \log(1 + P) + \log\left( 1 + \frac{P}{1+P^{\alpha}} \right), \;
2 \log\left( 1 + P^{\alpha} + \frac{P}{1 + P^{\alpha}} \right) \biggr\}
- \log\bigl(1+P+P^{\alpha}\bigr) \Biggr] \nonumber \\
& \triangleq \overline{\Delta}(P,P^{\alpha-1}).
\label{eq:upper bound on Delta for a specific setting}
\end{align}
Consequently, for $\alpha \in [0,1)$, we get from \eqref{eq:upper bound on Delta for a specific setting} that
\begin{align}
& \lim_{P \rightarrow \infty} \frac{\overline{\Delta}(P, P^{\alpha-1})}{\log P} \nonumber \\[0.1cm]
& = \frac{1}{2} \Bigl[ \min \Bigl\{2-\alpha, \;
2 \max\bigl\{\alpha, 1-\alpha\bigr\} \Bigr\} - 1 \Bigr] \nonumber \\[0.1cm]
& = \min \left\{ \frac{1-\alpha}{2}, \Bigl|\frac{1}{2}-\alpha \Bigr| \right\} \nonumber \\
& = \delta(\alpha)
\label{eq:asymptotic tightness of the upper bound on Delta}
\end{align}
where the last equality follows from \eqref{eq:closed-form expression for delta}.

The substitution of the cross-link gain $a=P^{\alpha-1}$ into the lower
bound on $\Delta(P,a)$ in
\eqref{eq:lower bound on Delta - symmetric Gaussian IC with weak interference}
gives that, for $P \geq 2.551$ and $\alpha \in [0,1)$,
\begin{align}
& \Delta(P,P^{\alpha-1}) \nonumber \\
& \geq \frac{1}{2} \, \Biggl[ \min \Biggl\{\log\bigl(1 + P + P^{\alpha}\bigr)
+  \log\left( 1 + \frac{P}{1 + P^{\alpha}} \right), \;
2 \log\left(1 + P^{\alpha} + \frac{P}{1 + P^{\alpha}} \right) \Biggr\} \nonumber \\
& \qquad - \min \left\{\log\bigl(1+P+P^{\alpha}\bigr) + \log\left(1+\frac{P}{(1+P^\alpha)^2} \right),
\log(1+2P) \right\} \Biggr] - 1 \nonumber \\
& \triangleq \underline{\Delta}(P,P^{\alpha-1}).
\label{eq:lower bound on Delta for the same specific setting}
\end{align}
Consequently, for $\alpha \in [0,1)$, it follows from
\eqref{eq:lower bound on Delta for the same specific setting} that
\begin{align}
& \lim_{P \rightarrow \infty} \frac{\underline{\Delta}(P, P^{\alpha-1})}{\log P} \nonumber \\
& \stackrel{\text{(a)}}{=} \frac{1}{2} \left[ \min \Bigl\{2-\alpha, \, 2 \max\{\alpha, 1-\alpha\}\Bigr\} -1 \right] \nonumber \\
& \stackrel{\text{(b)}}{=} \delta(\alpha)
\label{eq:asymptotic tightness of the lower bound on Delta}
\end{align}
where the substraction by~1 in equality~(a) follows from the satisfiability of the inequality
$$\log(1+P) \leq \min \left\{\log\bigl(1+P+P^{\alpha}\bigr) +
\log\left(1+\frac{P}{(1+P^\alpha)^2} \right), \log(1+2P) \right\} \leq \log(1+2P)$$
which therefore implies that
$$\lim_{P \rightarrow \infty} \frac{\min \left\{\log\bigl(1+P+P^{\alpha}\bigr) +
\log\left(1+\frac{P}{(1+P^\alpha)^2} \right), \log(1+2P) \right\}}{\log P} = 1.$$
Equality~(b) in \eqref{eq:asymptotic tightness of the lower bound on Delta}
follows from the last two equalities in \eqref{eq:asymptotic tightness of the upper bound on Delta}.

To conclude, \eqref{eq:asymptotic tightness of the upper bound on Delta} and
\eqref{eq:asymptotic tightness of the lower bound on Delta}
demonstrate the asymptotic tightness of the upper and lower bound in
\eqref{eq:upper bound on Delta - symmetric Gaussian IC with weak interference} and
\eqref{eq:lower bound on Delta - symmetric Gaussian IC with weak interference}, respectively,
for the considered coupling of the SNR and INR in \eqref{eq:scaling of SNR and INR}. This
completes the proof of the theorem.
\end{proof}

\begin{remark}
{\em The following is a discussion on
Theorem~\ref{theorem: asymptotic tightness of the bounds on the excess-rate}. Consider the case
where the SNR and INR scalings are coupled such that \eqref{eq:scaling of SNR and INR} holds.
Under this assumption, the reason for the asymptotic tightness of the upper and lower bounds
in \eqref{eq:upper bound on Delta for a specific setting} and
\eqref{eq:lower bound on Delta for the same specific setting} is twofold.
The first reason is related to the ETW bound that provides the exact asymptotic
linear growth of the sum-rate with $\log P$ (see \cite[Theorem~2]{EtkinTseWang08}).
The second reason is attributed to the fact that, for a weak and symmetric two-user GIC, the total
rate at the two corner points of the capacity region is
bounded between $\frac{1}{2} \, \log(1+P)$ and $\frac{1}{2} \, \log(1+2P)$ (see
\eqref{eq:bound1 on the corner points for weak interference}
and \eqref{eq:bound2 on the corner points for weak interference}) and both scale
like $\frac{1}{2} \, \log P$ for large~$P$. It is noted that an ignorance
of the effect of Kramer's bound in the derivation of
the upper bounds on the right-hand sides of
\eqref{eq:bound1 on the corner points for weak interference}
and \eqref{eq:bound2 on the corner points for weak interference} would have weakened
the lower bound on $\Delta(P,P^{\alpha-1})$ by a removal of
the term $\frac{1}{2} \, \log(1+2P)$ from the right-hand side of
\eqref{eq:lower bound on Delta for the same specific setting}. Consequently,
for $\alpha \in \bigl[0, \frac{1}{2}\bigr]$,
this removal would have reduced the asymptotic limit in
\eqref{eq:asymptotic tightness of the lower bound on Delta}
from $\delta(\alpha) = \frac{1}{2} - \alpha$ (see \eqref{eq:closed-form expression for delta})
to zero.}
\label{remark: explanation for the asymptotic tightness of the upper and lower bounds on Delta}
\end{remark}

As a consequence of the asymptotic analysis in this sub-section, some implications are provided
in the following:
\begin{enumerate}
\item Consider a two-user symmetric GIC where the cross-link gain is equal to $a = P^{\alpha-1}$
for $\alpha \geq 0$.
From \eqref{eq:definition of GDOF}, \eqref{eq:delta - analogous definition to GDOF},
\eqref{eq:closed-form expression for delta} and \eqref{eq:closed-form expression for the GDOF},
it follows that
\begin{align}
\lim_{P \rightarrow \infty} \frac{\Delta(P,P^{\alpha-1})}{C_{\text{sum}}(P,P^{\alpha-1})}
= \frac{\delta(\alpha)}{d(\alpha)} = \left\{ \begin{array}{cl}
    \frac{1-2\alpha}{2(1-\alpha)}, & \mbox{if $0 \leq \alpha < \frac{1}{2}$} \\[0.1cm]
    1-\frac{1}{2\alpha}, & \mbox{if $\frac{1}{2} \leq \alpha < \frac{2}{3}$} \\[0.1cm]
    \frac{1-\alpha}{2-\alpha}, & \mbox{if $\frac{2}{3} \leq \alpha < 1$} \\[0.1cm]
    0, & \mbox{if $\alpha \geq 1$}
  \end{array} \right.
  \label{eq:closed form expression for the normalized loss in total rate}
\end{align}
is the asymptotic fractional loss in the total rate at the corner points of the capacity
region.

\item Analogously to the GDOF of a two-user symmetric GIC in \eqref{eq:definition of GDOF},
the function $\delta$ is defined in \eqref{eq:delta - analogous definition to GDOF} by
replacing the sum-rate with the excess rate for the sum-rate w.r.t. the corner points;
in both cases, it is assumed that the cross-link gain is $a = P^{\alpha-1}$ for some
interference level $\alpha \geq 0$.
The GDOF is known to be a non-monotonic function of $\alpha$ over the interval [0,1]
(see \cite[pp.~5542--5543]{EtkinTseWang08} and \eqref{eq:closed-form expression for the GDOF}).
From \eqref{eq:closed-form expression for delta},
it also follows that $\delta$ is a non-monotonic function over this interval.
For $P > 1$, the cross-link gain $a = P^{\alpha-1}$
forms a monotonic increasing function of $\alpha \in [0, 1]$, and it is a one-to-one mapping
from the interval $[0,1]$ to itself.
This implies that, for large $P$, the excess rate for the sum-rate w.r.t. the corner
points (denoted by $\Delta(P,a)$) is a non-monotonic function of $a$ over the interval [0,1].
This observation is supported by numerical results in Section~\ref{subsection:numerical results}.
A discussion on this phenomenon is provided later in this section (see
Remark~\ref{remark: non-monotonicty of Delta for large P}).

\item Consider the closed-form expression in \eqref{eq:closed-form expression for the GDOF}
for the GDOF of a symmetric two-user GIC. For large $P$, the worst interference w.r.t. the sum-rate
is known to occur when the cross-link gain scales like $\frac{1}{\sqrt{P}}$ or it is~1 (this refers
to $\alpha = \frac{1}{2}$ or $\alpha=1$, respectively).
If $\alpha = \frac{1}{2}$, we have from \eqref{eq:closed-form expression for the GDOF}
\begin{align}
& d\Bigl(\frac{1}{2}\Bigr)
= \lim_{P \rightarrow \infty} \frac{C_{\text{sum}}\bigl(P, \frac{1}{\sqrt{P}}\bigr)}{\log P}
= \frac{1}{2}
\label{eq:GDOF for alpha equal to one-half}
\end{align}
and, from \eqref{eq:closed form expression for the normalized loss in total rate},
\begin{align}
\lim_{P \rightarrow \infty} \frac{\Delta\bigl(P,\frac{1}{\sqrt{P}}\bigr)}{C_{\text{sum}}\bigl(P,\frac{1}{\sqrt{P}}\bigr)} = 0.
\label{eq: an asymptotic limit for a that is the reciprocal of the square root of P}
\end{align}
The same also holds for the case where $\alpha=1$ (i.e., when the cross-link gain is $a=1$).
It therefore follows that, for the worst interference w.r.t. the sum-rate, there is asymptotically no
loss in the total rate $(R_1 + R_2)$ when the users operate at one of the corner points of the capacity region.

\item The limit on the left-hand side of \eqref{eq:closed form expression for the normalized loss in total rate}
is bounded between zero and one-half for $a = P^{\alpha-1}$ with $\alpha \geq 0$, and
it gets a local maximal value at $\alpha = \frac{2}{3}$ (which is global maximum for $\alpha \geq \frac{1}{2}$).
From Theorem~\ref{theorem:closed-form expression for delta}, we have
\begin{align}
\lim_{P \rightarrow \infty} \frac{\Delta\bigl(P,\frac{1}{\sqrt[3]{P}}\bigr)}{\log P} = \frac{1}{6}.
\label{eq: asymptotic limits for a that is the reciprocal of the 3rd root of P}
\end{align}

\item From the asymptotic upper and lower bounds on $\Delta(P,a)$ for large $P$ and a fixed $a \in (0,1)$
(see \eqref{eq:upper bound on Delta - large P, symmetric Gaussian IC with weak interference}
and \eqref{eq:lower bound on Delta - large P, symmetric Gaussian IC with weak interference}), we have
$$\frac{1}{2} \log\left(1+\frac{1}{a}\right)-1 \leq \lim_{P \rightarrow \infty} \Delta(P,a)
\leq \frac{1}{2} \log\left(\frac{1}{a}\right), \quad \forall \, a \in (0,1).$$
Since also the equality $\Delta(P,a)=0$ holds for every $a \geq 1$, then it follows that
$$\lim_{P \rightarrow \infty} \frac{\Delta(P,a)}{\log P} = 0, \quad \forall \, a > 0.$$
This is consistent with the equality $\delta(1)=0$ in \eqref{eq:closed-form expression for delta}.

\item Consider the capacity region of a weak and symmetric two-user GIC, and the bounds
on the excess rate for the sum-rate w.r.t. the corner points of its capacity region
(see Sections~\ref{subsection: An upper bound on delta for the corner points of the capacity region}
and~\ref{subsection: A lower bound on delta for the corner points of the capacity region}).
In this case, the transmission rate of one of the users is assumed to be equal to the single-user capacity of the
respective AWGN channel. Consider now the case where the transmission rate of this user is reduced
by no more than $\varepsilon > 0$, so it is within $\varepsilon$ of the single-user capacity.
Then, from Theorem~\ref{theorem: bounds on the corner points of a GIC with weak interference},
it follows that the upper bound on the transmission rate of the other user cannot increase by more than
$$f(\varepsilon) \triangleq \max\left\{2 \varepsilon, \, \left(1+\frac{1+P}{a P}\right) \varepsilon \right\}.$$
Consequently, the lower bound on the excess rate for the sum-rate in
\eqref{eq:lower bound on Delta - symmetric Gaussian IC with weak interference} is reduced by no more than
$f(\varepsilon)$. Furthermore, the upper bound on this excess rate cannot increase by more than $\varepsilon$
(note that if the first user reduces its transmission rate by no more than $\varepsilon$, then the other user
can stay at the same transmission rate; overall, the total transmission rate it decreased by no more than $\varepsilon$,
and consequently the excess rate for the sum-rate cannot increase by more than $\varepsilon$).
Revisiting the analysis in this sub-section by introducing a positive $\varepsilon \triangleq \varepsilon(P)$
to the calculations, before taking the limit of $P$ to infinity, leads to the conclusion that the corresponding
characterization of $\delta$ in \eqref{eq:closed-form expression for delta} stays un-affected as long as
$$\lim_{P \rightarrow \infty} \frac{\varepsilon(P)}{\log P} = 0$$
which then implies that $$\lim_{P \rightarrow \infty} \frac{f\bigl(\varepsilon(P)\bigr)}{\log P} = 0$$ when the value
of the cross-link gain $a$ is fixed.
For example, this happens to be the case if $\varepsilon$ scales like
$(\log P)^{\beta}$ for an arbitrary $\beta \in (0,1)$ (so, in the limit where
$P \rightarrow \infty$, we have $\varepsilon(P) \rightarrow \infty$ but
$\frac{\varepsilon(P)}{\log P} \rightarrow 0$).
\end{enumerate}

Consider a weak and symmetric GIC where, in standard form,
$P_1 = P_2 = P$ and $a_{12} = a_{21} = a \in (0,1)$.
Let $\Delta$ denote the excess rate for the sum-rate w.r.t. the corner points of
the capacity region, as it is defined in \eqref{eq:Delta}.
The following summarizes the results that are introduced in this section so far
for this channel model:
\begin{itemize}
\item The excess rate $\Delta$ satisfies the upper bound in
\eqref{eq:upper bound on Delta - symmetric Gaussian IC with weak interference}.
\item If $P \geq 2.551$, it also satisfies the lower bound in
\eqref{eq:lower bound on Delta - symmetric Gaussian IC with weak interference}.
\item For large enough $P$, $\Delta = \Delta(P,a)$ is a non-monotonic
function of $a$ over the interval $(0,1]$.
\item The upper and lower bounds on $\Delta(P, P^{\alpha-1})$ in
\eqref{eq:upper bound on Delta for a specific setting} and
\eqref{eq:lower bound on Delta for the same specific setting}, respectively,
imply the exact asymptotic scaling of $\Delta(P, P^{\alpha-1})$
with $\log P$ for an arbitrary $\alpha \geq 0$ (note that these bounds
apply to $\alpha \in [0,1)$, but $\Delta(P, P^{\alpha-1})=0$ when $\alpha \geq 1$ and $P \geq 1$).
\item The asymptotic linear growth of $\Delta(P, P^{\alpha-1})$ with $\log P$, for $\alpha \geq 0$,
is given by $\delta(\alpha)$ in \eqref{eq:closed-form expression for delta}.
Furthermore, a connection between the function $\delta$ and the symmetric GDOF is given in
\eqref{eq:connection between delta and GDOF for weak interference} (see Fig.~\ref{Figure:GDOF_and_delta}).
\item When the value of the cross-link gain is kept fixed between~0 and~1, the excess rate
$\Delta$ satisfies the upper and lower bounds in
\eqref{eq:upper bound on Delta - large P, symmetric Gaussian IC with weak interference} and
\eqref{eq:lower bound on Delta - large P, symmetric Gaussian IC with weak interference}, respectively.
These asymptotic bounds on $\Delta$ scale
like $\frac{1}{2} \, \log \left(\frac{1}{a}\right)$, and they
differ by at most 1~bit per channel use, irrespectively of the fixed value of
$a \in (0,1]$.
\item Let $a = P^{\alpha-1}$ for some $\alpha \geq 0$ and $P>1$.
Consider the loss in the total rate, expressed as a fraction of the sum-rate,
when the users operate at one of the corner points of the capacity region.
This asymptotic normalized loss is provided in
\eqref{eq:closed form expression for the normalized loss in total rate},
and it is bounded between 0 and $\frac{1}{2}$. For large values of $P$, it
roughly varies from 0 to $\frac{1}{4}$ by letting $a$ grow (only slightly)
from $\frac{1}{\sqrt{P}}$ to $\frac{1}{\sqrt[3]{P}}$.
\end{itemize}

The following remark refers to the third item above:
\begin{remark}
{\em For a weak and symmetric two-user GIC, the excess rate for the sum-rate w.r.t.
the corner points is the difference between the sum-rate of the capacity region
and the total rate at any of the two corner points of the capacity region. According
to Theorem~\ref{theorem: bounds on the corner points of a GIC with weak interference},
for large $P$, the total rate at a corner point is an increasing function of $a \in (0,1]$.
Although it is known that, for large $P$,
the sum-rate of the capacity region is not monotonic decreasing in $a$, a priori, there was a
possibility that by subtracting from it a monotonic increasing function in $a$, the difference
(that is equal to the excess rate $\Delta$) would be monotonic decreasing in $a$. However,
it is shown not to be the case. The fact that, for large $P$, the excess rate $\Delta(P,a)$ is not
a monotonic decreasing function of $a$ is a stronger property than the non-monotonicity
of the sum-rate.}
\label{remark: non-monotonicty of Delta for large P}
\end{remark}

\subsection{A Tightening of the Bounds on the Excess Rate $(\Delta)$ for Weak and Symmetric GICs}
\label{subsection:a tightening of the bounds on the excess rate for symmetric GICs with weak interference}
In Sections~\ref{subsection: An upper bound on delta for the corner points of the capacity region}
and~\ref{subsection: A lower bound on delta for the corner points of the capacity region}, closed-form
expressions for upper and lower bounds on $\Delta$ are derived for weak GICs.
These expressions are used in Section~\ref{subsection:An analogous measure to the generalized degrees of freedom}
for an asymptotic analysis where we let $P$ tend to infinity. In the following, the bounds
on the excess rate $\Delta$ are improved for finite $P$ at the cost of introducing bounds that are
subject to numerical optimizations. For simplicity, we focus on the model of a weak and symmetric GIC.
In light of Theorem~\ref{theorem: asymptotic tightness of the bounds on the excess-rate}, a use
of improved bounds does not imply any asymptotic improvement as compared to the
bounds in Section~\ref{subsection:An analogous measure to the generalized degrees of freedom}
that are expressed in closed form.
Nevertheless, the new bounds are improved for finite SNR and INR, as is illustrated in
Section~\ref{subsection:numerical results}.

\subsubsection{An improved lower bound on $\Delta$}
An improvement of the lower bound on the excess rate for the sum-rate w.r.t. the corner points ($\Delta$)
is obtained by relying on an improved lower bound on the sum-rate in comparison to
\eqref{eq:lower bound on Delta - symmetric Gaussian IC with weak interference}.
For tightening the lower bound on the sum-rate, it is suggested to combine
\eqref{eq:lower bound on Delta - symmetric Gaussian IC with weak interference} with the
lower bound in \cite[Eq.~(32)]{Sason04} (the latter bound follows
from the Han-Kobayashi achievable region, see \cite[Table~1]{Sason04}):
\begin{align}
& C_{\text{sum}} \geq \max_{\alpha, \beta, \delta} \, \rho(P, a, \alpha, \beta, \delta)
\label{eq:lower bound on the sum-rate - symmetric GIC with weak interference with 3-parameter optimization}
\end{align}
where
\begin{align*}
\rho(P,a, \alpha, \beta, \delta)
& \triangleq \delta \, \log \left( 1 + \frac{2 \alpha \delta P}{1+2a \beta \delta P} \right) +
\delta \, \log \left( 1 + \frac{2 \beta \delta P}{1+2a \alpha \delta P} \right)
+ \left(\frac{1-2 \delta}{2}\right) \, \log\bigl(1+2(1+2 \delta)P\bigr) \nonumber \\
&\quad + \min
\begin{aligned}[t]
\Biggl\{&\frac{\delta}{2} \cdot \log \left( 1 + \frac{2 \overline{\alpha}
\delta P + 2a \overline{\beta} \delta P}{1 + 2 \alpha
\delta P + 2a \beta \delta P} \right) +
\frac{\delta}{2} \cdot \log \left( 1 + \frac{2 \overline{\beta}
\delta P + 2a \overline{\alpha} \delta P}{1 + 2 \beta
\delta P + 2a \alpha \delta P} \right), \nonumber \\
&\delta \, \log
\left( 1 + \frac{2 \overline{\alpha} \delta P}{1 + 2 \alpha
\delta P + 2a \beta \delta P} \right) +
\delta \, \log \left( 1 + \frac{2 \overline{\beta}
\delta P}{1 + 2 \beta \delta P + 2a \alpha \delta P} \right), \\
&\delta \, \log
\left( 1 + \frac{2a \overline{\alpha} \delta P}{1 + 2a \alpha
\delta P + 2 \beta \delta P} \right) +
\delta \, \log \left( 1 + \frac{2a \overline{\beta}
\delta P}{1 + 2 \alpha \delta P + 2a \beta \delta P} \right)\Biggr\}
\end{aligned} \\[0.1cm]
& \qquad \, \forall \, (\alpha, \beta, \delta) \quad \text{s.t.}
\quad 0 \leq \alpha \leq 1, \quad 0 \leq \beta \leq 1, \quad 0 \leq \delta \leq \frac{1}{2} \, .
\end{align*}
A combination of \eqref{eq:Delta},
\eqref{eq:upper bound on total throughput for the two corner points - weak interference} and
\eqref{eq:lower bound on the sum-rate - symmetric GIC with weak interference with 3-parameter optimization}
gives the following lower bound on $\Delta$ for a weak and symmetric GIC:
\begin{align*}
\Delta &= \Delta(P,a)
\geq \max_{\alpha, \beta, \delta} \Bigl\{ \rho(P, a, \alpha, \beta, \delta) \Bigr\} -
\frac{1}{2} \, \min \biggl\{ \log\bigl(1+(a+1)P\bigr) + \log\left(1+\frac{P}{(1+aP)^2} \right),
\, \log(1+2P) \biggr\}.
\end{align*}
Furthermore, it follows from Lemma~\ref{lemma for a symmetric Gaussian IC} that if $P \geq 2.551$,
a combination of \eqref{eq:Delta} and
\eqref{eq:upper bound on total throughput for the two corner points - weak interference}
with the two lower bounds on the sum-rate
in \eqref{eq:lower bound on Delta - symmetric Gaussian IC with weak interference} and
\eqref{eq:lower bound on the sum-rate - symmetric GIC with weak interference with 3-parameter optimization}
gives the following tightened lower bound on $\Delta$ (as compared to
\eqref{eq:lower bound on Delta - symmetric Gaussian IC with weak interference}):
\begin{align}
\Delta \geq
& \max \Biggl\{ \max_{\alpha, \beta, \delta} \Bigl\{ \rho(P, a, \alpha, \beta, \delta) \Bigr\}, \; \nonumber \\
& \hspace*{1cm} \frac{1}{2} \, \min \biggl\{\log\bigl(1 + (a+1)P\bigr)  +  \log\left( 1 + \frac{P}{1 + aP} \right),
\; 2 \log\left(1 + aP + \frac{P}{1 + aP}\right) \biggr\} - 1 \Biggr\} \nonumber \\
& - \frac{1}{2} \, \min \biggl\{ \log\bigl(1+(a+1)P\bigr) + \log\left(1+\frac{P}{(1+aP)^2} \right),
\, \log(1+2P) \biggr\}.
\label{eq: tightened lower bound on Delta - symmetric Gaussian IC with weak interference}
\end{align}

\subsubsection{An improved upper bound on $\Delta$}
\label{subsubsection: An improved upper bound on Delta}
An improvement of the upper bound on the excess rate for the sum-rate w.r.t. the corner points ($\Delta$)
is obtained by relying on an improved upper bound on the sum-rate (as compared to
\eqref{eq:upper bound on the sum-rate - weak interference}).
This is obtained by calculating the minimum of Etkin's bound in \cite{Etkin_ISIT09}
and Kramer's bound in \cite[Theorem~2]{Kramer04}. Following the discussion in
\cite{Etkin_ISIT09}, Etkin's bound outperforms the upper bounds on the sum-rate in
\cite{Annapureddy_Veeravalli09}, \cite{EtkinTseWang08}, \cite{Motahari09}, \cite{Kramer09};
nevertheless, for values of $a$ that are close to~1, Kramer's bound in \cite[Theorem~2]{Kramer04}
outperforms the other known bounds on the sum-rate (see \cite[Fig.~1]{Etkin_ISIT09}).
Consequently, the minimum of Etkin's and Kramer's bounds in \cite{Etkin_ISIT09}
and \cite[Theorem~2]{Kramer04} is calculated as an upper bound on the sum-rate.
Combining \cite[Eqs.~(14)-(16)]{Etkin_ISIT09} (while adapting notation, and dividing the bound
by~2 for a real-valued GIC), the simplified version of Etkin's upper bound on the sum-rate
for real-valued, weak and symmetric GICs gets the form
\begin{align}
C_{\text{sum}} \leq
\min_{\alpha, \sigma, \rho} \Biggl\{ & \min \biggl\{
\frac{1}{2} \, \log\left(1+\frac{P(1+\alpha^2) \gamma}{(1-\rho^2) \sigma^2}\right),
\; \log\left(1+\frac{\alpha^2 P \gamma}{(1-\rho^2)\sigma^2}\right) \biggr\} \nonumber \\[0.1cm]
& + \log \left(\frac{\bigl(1+P(1+a)\bigr) \bigl(P(1+\alpha^2)+\sigma^2\bigr)-\bigl(P(1+\alpha \sqrt{a})+
\rho \sigma\bigr)^2}{P(1+\alpha^2)\gamma+(1-\rho^2)\sigma^2} \right) \Biggr\}
\label{eq:Etkin's upper bound on the sum-rate, ISIT09}
\end{align}
where
\begin{align}
\gamma = \alpha^2 - 2 \alpha \rho \sigma \sqrt{a} + \sigma^2 a, \quad
\rho = \alpha \sigma \sqrt{a} \pm \sqrt{(1-\alpha^2)(1-\sigma^2 a)}, \quad
\sigma \in \bigl[0, \frac{1}{\sqrt{a}}\bigr], \quad \alpha \in [-1, 1].
\label{eq: rho}
\end{align}
The two possible values of $\rho$ in \eqref{eq: rho} need to be checked in the optimization of
the parameters. For a weak and symmetric GIC, Kramer's upper bound on the sum-rate
(see \cite[Eqs.~(44) and~(45)]{Kramer04}) is simplified to
\begin{align}
C_{\text{sum}} \leq \frac{1}{2} \log\left(1+2P+\frac{B}{2}-
\frac{1}{2} \sqrt{B^2-4P^2 \left(\frac{1}{a}-1\right)^2}\right)
\label{eq:Kramer's upper bound on the sum-rate, IT 2004}
\end{align}
where
$B = \frac{1}{a^2}+2P \left(\frac{1}{a}-1\right)-1$. An improvement of the
upper bound on the sum-rate in \eqref{eq:upper bound on the sum-rate - weak interference}
follows by taking the minimal value of the bounds in
\eqref{eq:Etkin's upper bound on the sum-rate, ISIT09} and
\eqref{eq:Kramer's upper bound on the sum-rate, IT 2004}; consequently, a
combination of
\eqref{eq:Delta} and \eqref{eq:lower bound on total throughput for the two corner points - weak interference}
with this improved upper bound on the sum-rate
provides an improved upper bound on $\Delta$ (as compared to the bound in
\eqref{eq:upper bound on Delta - symmetric Gaussian IC with weak interference}).

\subsubsection{A simplification of the improved upper bound on $\Delta$ for a sub-class of weak and symmetric GICs}
\label{subsubsection: Simplification of the improved bounds on Delta for a sub-class of symmetric GICs}
The following simplifies the improved upper bound on the excess rate
$(\Delta)$ for a sub-class of weak and symmetric GICs. It has been independently
demonstrated in \cite{Annapureddy_Veeravalli09}, \cite{Motahari09} and \cite{Kramer09} that if
\begin{align}
0 < a < \frac{1}{4} \, , \quad 0 < P \leq \frac{\sqrt{a} - 2a}{2a^2}
\label{eq:conditions for symmetric GICs with weak interference where the sum-rate is known exactly}
\end{align}
then the sum-rate of the GIC is equal to
\begin{align}
C_{\text{sum}} = \log\left(1 + \frac{P}{1+aP}\right).
\label{eq:exact sum-rate for a sub-class of symmetric GICs with weak interference}
\end{align}
This sum-rate is achievable by using single-user Gaussian codebooks, and treating the
interference as noise. Under the conditions in
\eqref{eq:conditions for symmetric GICs with weak interference where the sum-rate is known exactly},
the exact sum-rate coincides with the upper bound given in
\eqref{eq:Etkin's upper bound on the sum-rate, ISIT09}. Hence,
a replacement of the upper bound on the sum-rate in
\eqref{eq:upper bound on the sum-rate - weak interference}
with the exact sum-rate in \eqref{eq:exact sum-rate for a sub-class of symmetric GICs with weak interference},
followed by a combination of \eqref{eq:Delta} and
\eqref{eq:lower bound on total throughput for the two corner points - weak interference} gives that
\begin{align}
\Delta \leq \frac{1}{2} \, \log\left(\frac{1}{1+aP} + \frac{P}{(1+aP)^2} \right).
\label{eq:tightened bound on Delta - sub-class of symmetric GICs with weak interference}
\end{align}
One can verify that, under the conditions in
\eqref{eq:conditions for symmetric GICs with weak interference where the sum-rate is known exactly},
the upper bound on $\Delta$ in
\eqref{eq:tightened bound on Delta - sub-class of symmetric GICs with weak interference} is indeed positive.

\subsection{Numerical Results}
\label{subsection:numerical results}
The following section presents numerical results for the bounds on the excess rate for the sum-rate
w.r.t. the corner points (denoted by $\Delta$) while focusing on weak and symmetric two-user GICs.

\begin{figure}[here!]
\begin{center}
\epsfig{file=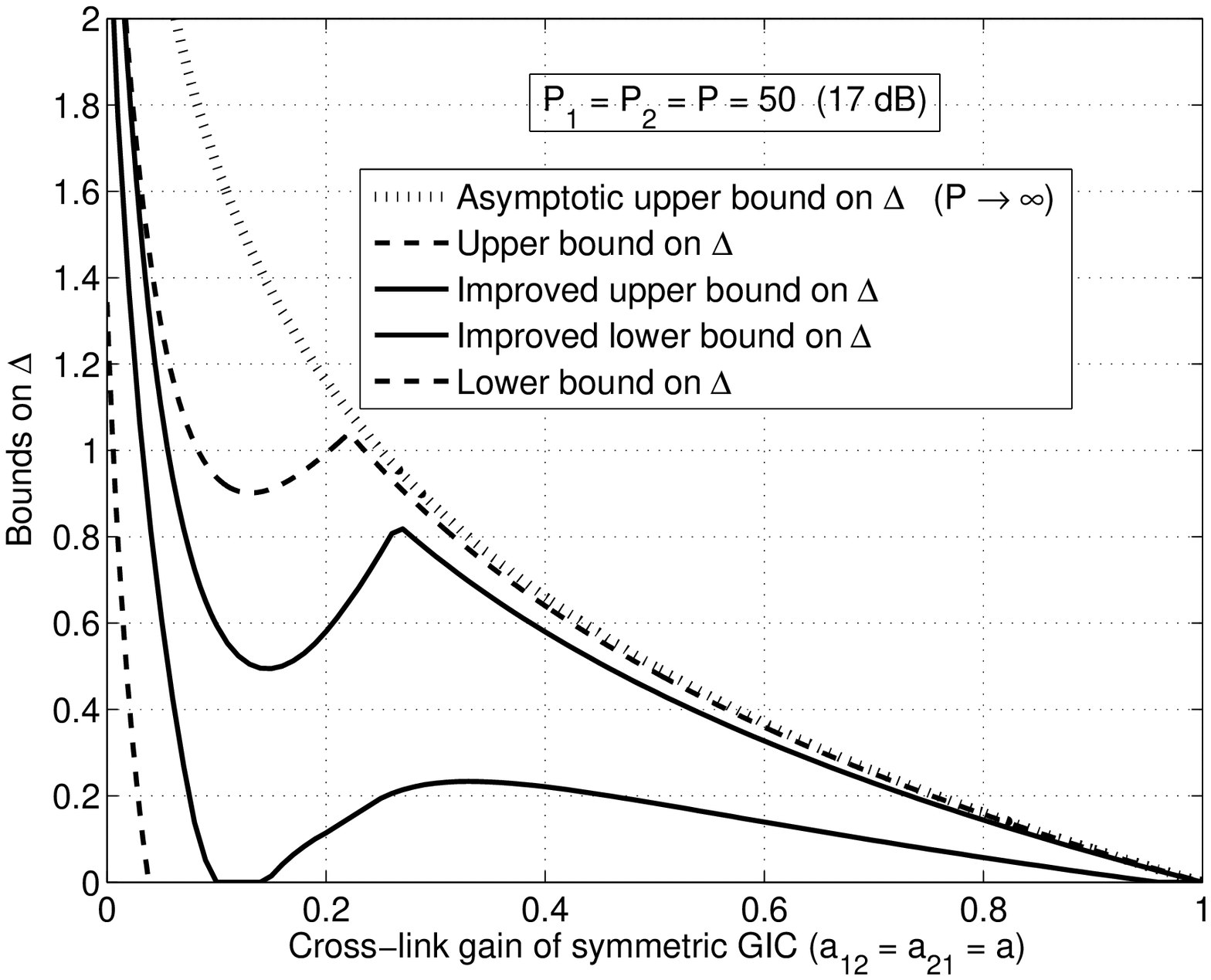,scale=0.6}\\
\epsfig{file=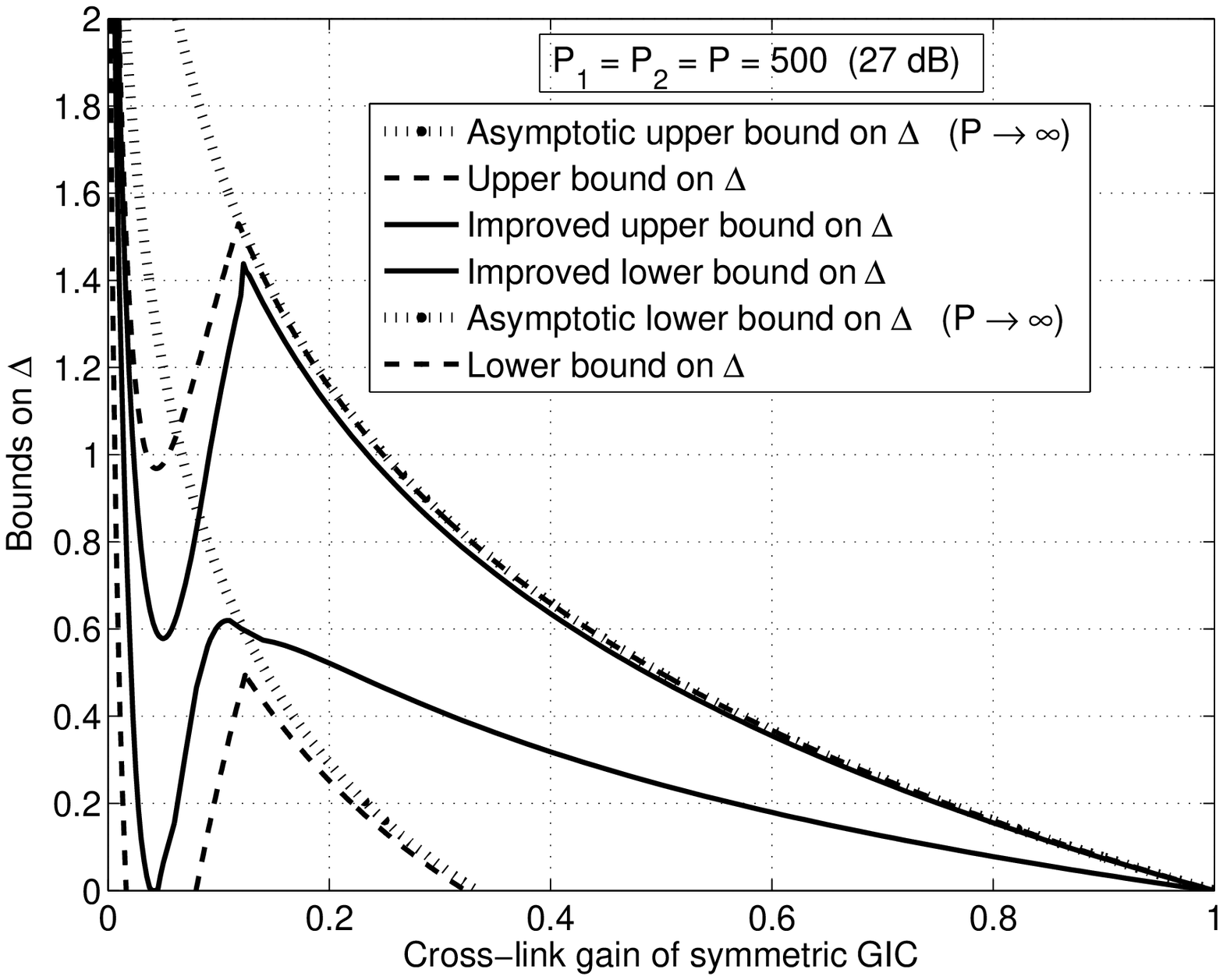,scale=0.6}
\caption{\label{Figure:Bounds on the excess rate - weak interference}
Upper and lower bounds on the excess rate for the sum-rate w.r.t. the corner points $(\Delta$) as a
function of the cross-link gain $(a)$. The plots refer to a weak and symmetric
GIC where $P_1 = P_2 = P$ and $a_{12} = a_{21} = a \in [0,1]$ in standard form.
The upper and lower plots refer to $P=50$ and $P=500$, respectively.
The upper and lower bounds on $\Delta$ rely on
\eqref{eq:upper bound on Delta - symmetric Gaussian IC with weak interference} and
\eqref{eq:lower bound on Delta - symmetric Gaussian IC with weak interference}, respectively,
and the improved bounds on $\Delta$ rely on
Section~\ref{subsection:a tightening of the bounds on the excess rate for symmetric GICs with weak interference}.
The dashed lines refer to the asymptotic upper and lower
bounds on $\Delta$ in \eqref{eq:upper bound on Delta - large P, symmetric Gaussian IC with weak interference}
and \eqref{eq:lower bound on Delta - large P, symmetric Gaussian IC with weak interference}, respectively.}
\end{center}
\end{figure}

\begin{figure}[here!]
\begin{center}
\epsfig{file=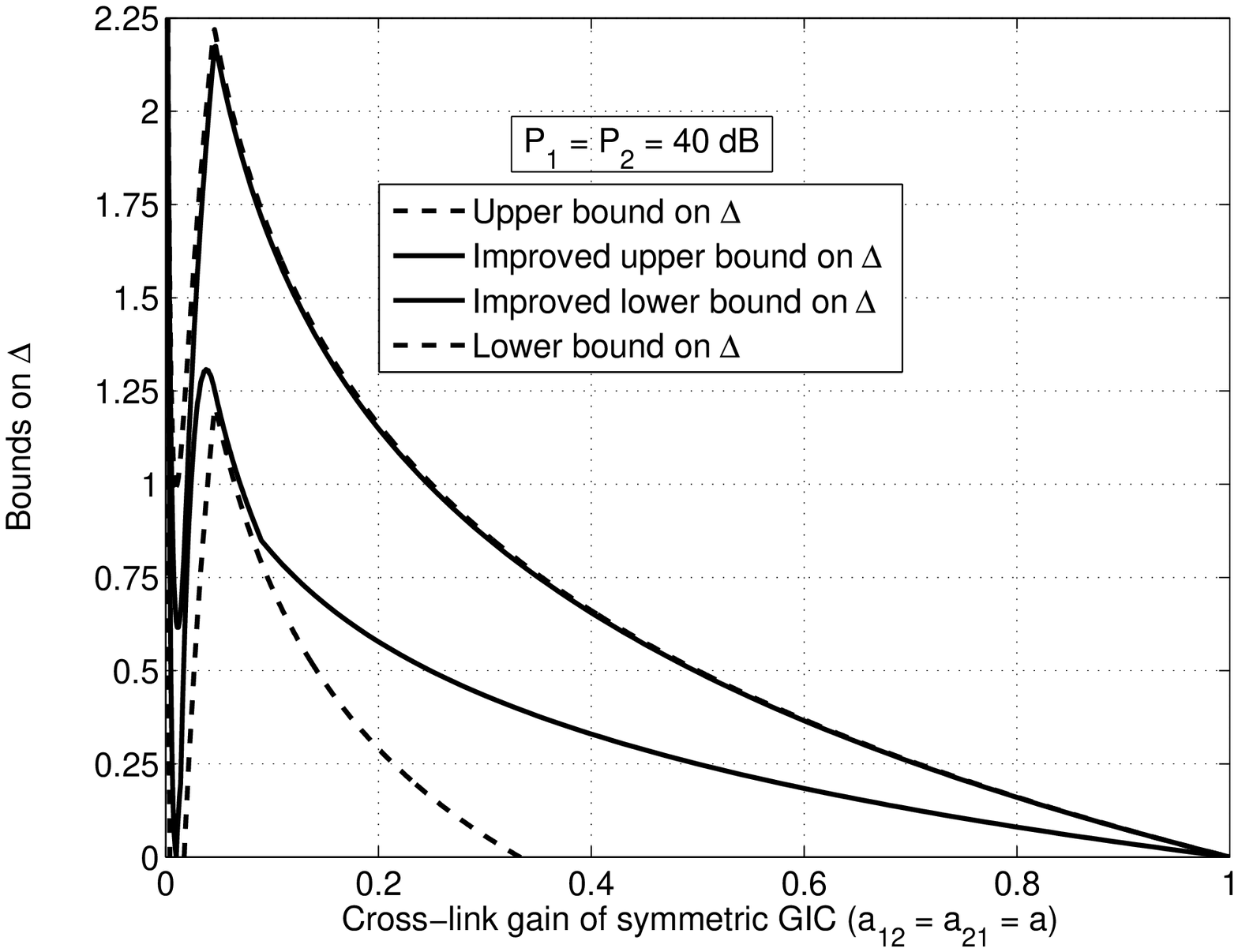,scale=0.6}\\
\epsfig{file=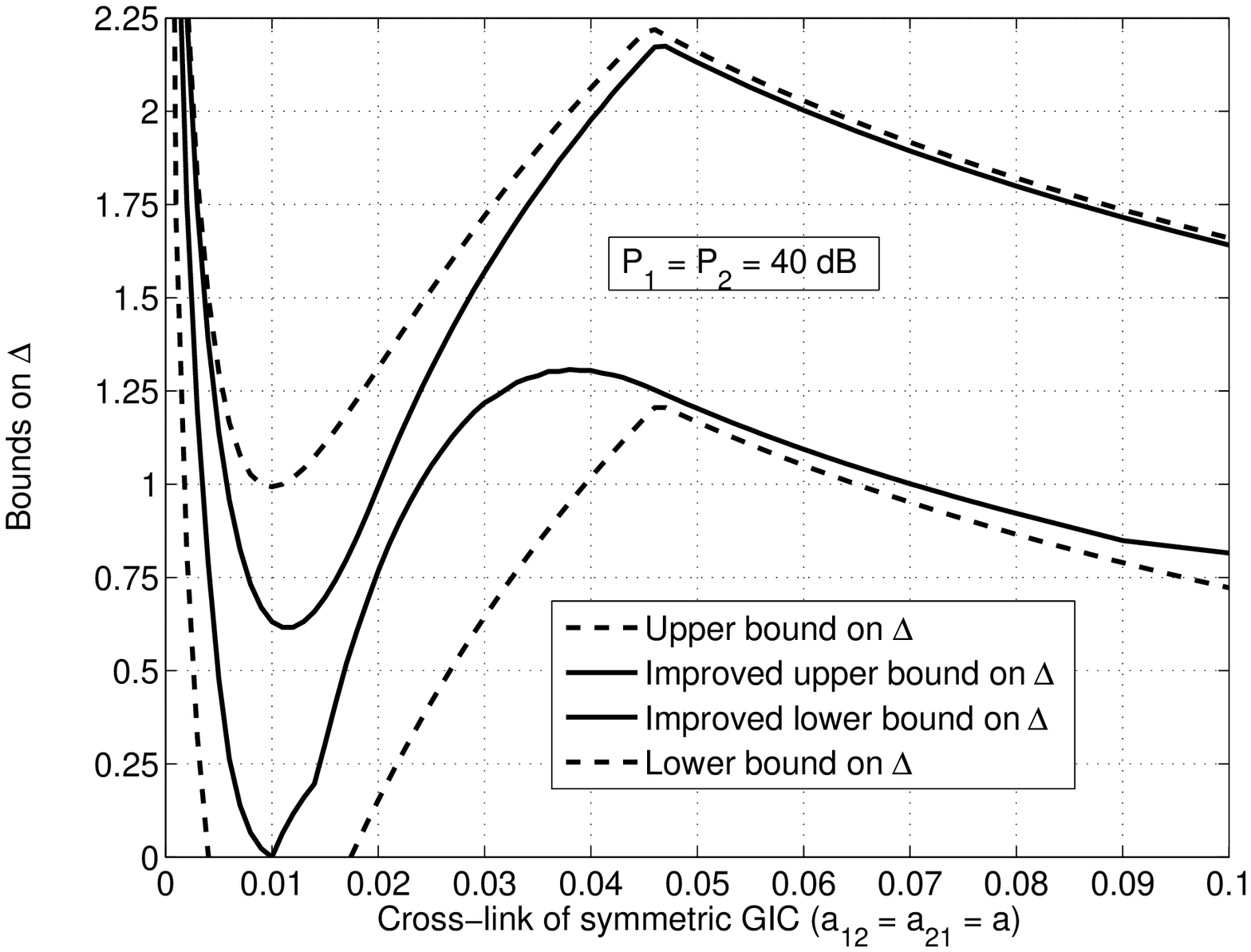,scale=0.6}
\caption{\label{Figure:Bounds on the excess rate - high P and weak interference}
Upper and lower bounds on the excess rate for the sum-rate w.r.t. the corner points $(\Delta$) as a
function of the cross-link gain $(a)$. This figure refers to a weak and symmetric
GIC where $P_1 = P_2 = P = 40$~dB and $a_{12} = a_{21} = a \in [0,1]$ in standard form.
The upper and lower bounds on $\Delta$ are given in
\eqref{eq:upper bound on Delta - symmetric Gaussian IC with weak interference} and
\eqref{eq:lower bound on Delta - symmetric Gaussian IC with weak interference}, respectively,
and the improved bounds on $\Delta$ rely on
Section~\ref{subsection:a tightening of the bounds on the excess rate for symmetric GICs with weak interference}.
The upper plot shows upper and lower bounds on $\Delta$ over the range of weak interference $(0 \leq a \leq 1$),
and the lower plot zooms in the upper plot for $a \in [0, 0.1]$; it shows that $\Delta$ is a non-monotonic function of $a$ in the weak interference regime.}
\end{center}
\end{figure}

Figure~\ref{Figure:Bounds on the excess rate - weak interference} compares upper and lower bounds
on $\Delta$ as a function of the cross-link gain for a weak and symmetric GIC.
The upper and lower plots of this figure correspond to $P=50$ and $P=500$, respectively. The upper
and lower bounds on $\Delta$ rely
on \eqref{eq:upper bound on Delta - symmetric Gaussian IC with weak interference} and
\eqref{eq:lower bound on Delta - symmetric Gaussian IC with weak interference}, respectively, and
the improved upper and lower bounds on $\Delta$ are based on
Section~\ref{subsection:a tightening of the bounds on the excess rate for symmetric GICs with weak interference}.
For $P=50$ (see the upper plot of Figure~\ref{Figure:Bounds on the excess rate - weak interference}), the
advantage of the improved bounds on $\Delta$ is exemplified; the lower bound on $\Delta$
for the case where $P=50$ is almost useless (it is zero unless the interference is very weak). The
improved upper and lower bounds on $\Delta$ for $P=50$ do not enable to conclude whether
$\Delta$ is a monotonic decreasing function of $a$ (for weak interference where $a \in [0, 1]$).
For $P=500$ (see the lower plot of Figure~\ref{Figure:Bounds on the excess rate - weak interference}),
the improved bounds on $\Delta$ indicate that it is not a monotonic decreasing function of $a$;
this follows by noticing that the improved upper bound on $\Delta$ at $a =0.045$ is equal to
0.578~bits per channel use, and its improved lower bound at $a = 0.110$ is equal to 0.620~bits per
channel use. The observation that, for large $P$, the function of $\Delta$ is not
monotonic decreasing in $a \in (0,1)$ is supported by the asymptotic analysis in
Section~\ref{subsection:An analogous measure to the generalized degrees of freedom}.
This conclusion is stronger than the observation that, for large enough $P$, the
sum-rate is not a monotonic decreasing function of $a \in [0,1]$ (see \cite[pp.~5542--5543]{EtkinTseWang08}),
as it is discussed in Remark~\ref{remark: non-monotonicty of Delta for large P} (see
Section~\ref{subsection:An analogous measure to the generalized degrees of freedom}).
Figures~\ref{Figure:Bounds on the excess rate - weak interference}
and~\ref{Figure:Bounds on the excess rate - high P and weak interference} show that the phenomenon of
the non-monotonicity of $\Delta$ as a function of $a$ is more dominant when the value of $P$ is increased.
These figures also illustrate the advantage of the improved upper and lower bounds on $\Delta$ in
Section~\ref{subsection:a tightening of the bounds on the excess rate for symmetric GICs with weak interference}
in comparison to the simple bounds on $\Delta$ in
\eqref{eq:upper bound on Delta - symmetric Gaussian IC with weak interference} and
\eqref{eq:lower bound on Delta - symmetric Gaussian IC with weak interference}. Note, however,
that the simple bounds on $\Delta$ that are given in closed-form expressions are asymptotically tight
as is demonstrated in Theorem~\ref{theorem: asymptotic tightness of the bounds on the excess-rate}.

\begin{table*}[here!]
\caption{Comparison of the asymptotic approximation of the excess rate for the sum-rate w.r.t. the corner points
$(\Delta)$ with its improved upper bound on $\Delta$ in
Section~\ref{subsubsection: An improved upper bound on Delta}.}
\label{Table: Numerical results for Delta} \centering
\begin{tabular}{|c||c|c||c|c|} \hline
Power constraint & Value of $a$ achieving &  Normalized $\Delta$
& Value of $a$ achieving & Normalized $\Delta$ \\
in standard form & minimum of $\Delta$ & by $\log P$
& maximum of $\Delta$ for $a \geq \frac{1}{\sqrt{P}}$
& by $\log P$ \\ \hline
& \; Asymptotic \; \; \; Exact & \; Asymptotic \; \; \; Exact
& \; Asymptotic \; \; \; Exact & \; Asymptotic \; \; \; Exact \\
& approximation \; \; value & approximation \; \; value
& approximation \; \; value & approximation \; \; value \\ \hline
$(P)$ & \hspace*{-1.2cm} $\bigl(a = \frac{1}{\sqrt{P}}\bigr)$
& \hspace*{-1.2cm} Eq.~\eqref{eq: an asymptotic limit for a that is the reciprocal of the square root of P}
& \hspace*{-1.2cm} $\bigl(a = \frac{1}{\sqrt[3]{P}}\bigr)$
& \hspace*{-1.2cm} Eq.~\eqref{eq: asymptotic limits for a that is the reciprocal of the 3rd root of P}
\\[0.1cm] \hline \hline
27~\text{dB} & \; \; \; 0.045 \; \; \; \; \; 0.050 & \; \; \; \; \; 0 \; \; \; \; \; \; 0.065
& \; \; \; 0.126 \; \; \; \; \; 0.140 & \; \; \; 0.167 \; \; \; \; \; 0.154 \\
40~\text{dB} & \; \; \; 0.010 \; \; \; \; \; 0.011 & \; \; \; \; \; 0 \; \; \; \; \; \; 0.046
& \; \; \; 0.046 \; \; \; \; \; 0.042 & \; \; \; 0.167 \; \; \; \; \; 0.164 \\
60~\text{dB} & \; \; \; 0.001 \; \; \; \; \; 0.001 & \; \; \; \; \; 0 \; \; \; \; \; \; 0.032
& \; \; \; 0.010 \; \; \; \; \; 0.010 & \; \; \; 0.167 \; \; \; \; \; 0.166 \\
\hline
\end{tabular}
\end{table*}

Table~\ref{Table: Numerical results for Delta} compares the asymptotic approximation of
$\Delta$ with its improved upper bound in Section~\ref{subsubsection: An improved upper bound on Delta}.
It verifies that, for large $P$, the minimal value of $\Delta$ is obtained at
$a \approx \frac{1}{\sqrt{P}}$; it also verifies that, for large $P$, the maximal value of $\Delta$
for $a \geq \frac{1}{\sqrt{P}}$ is obtained at $a \approx \frac{1}{\sqrt[3]{P}}$.
Table~\ref{Table: Numerical results for Delta} also supports the asymptotic limits in
\eqref{eq: an asymptotic limit for a that is the reciprocal of the square root of P}
and \eqref{eq: asymptotic limits for a that is the reciprocal of the 3rd root of P}, showing
how close are the numerical results for large $P$ to their corresponding asymptotic limits:
specifically, for large $P$, at $a = \frac{1}{\sqrt{P}}$ and $\frac{1}{\sqrt[3]{P}}$, the
ratio $\frac{\Delta}{\log P}$ tends to zero or~$\frac{1}{6}$, respectively; this is supported
by the numerical results in the 5th and 9th columns of Table~\ref{Table: Numerical results for Delta}.
The asymptotic approximations
in Table~\ref{Table: Numerical results for Delta} are consistent with the overshoots observed
in the plots of $\Delta$ when the cross-link gain $a$ varies between $\frac{1}{\sqrt{P}}$ and
$\frac{1}{\sqrt[3]{P}}$; this interval is narrowed as the value of $P$ is increased (see
Figures~\ref{Figure:Bounds on the excess rate - weak interference}
and~\ref{Figure:Bounds on the excess rate - high P and weak interference}).
Finally, it is also shown in Figures~\ref{Figure:Bounds on the excess rate - weak interference}
and~\ref{Figure:Bounds on the excess rate - high P and weak interference} that the
curves of the upper and lower bounds on $\Delta$, as a function of the cross-link gain
$a$, do not converge uniformly to their asymptotic upper and lower bounds in
\eqref{eq:upper bound on Delta - large P, symmetric Gaussian IC with weak interference}
and \eqref{eq:lower bound on Delta - large P, symmetric Gaussian IC with weak interference}, respectively.
This non-uniform convergence is noticed by the large deviation of the bounds for finite $P$ from the
asymptotic bounds where this deviation takes place over an interval of small values of $a$; however,
this interval of $a$ shrinks when the value of $P$ is increased, and its length is approximately $\frac{1}{\sqrt[3]{P}}$ for large $P$. This conclusion is consistent with the asymptotic analysis
in Section~\ref{subsection:An analogous measure to the generalized degrees of freedom}
(see the items that correspond to Eqs.~\eqref{eq:closed form expression for the normalized loss in total rate}
and~\eqref{eq: asymptotic limits for a that is the reciprocal of the 3rd root of P}), and it is also supported
by the numerical results in Table~\ref{Table: Numerical results for Delta}.

\section{Summary and Outlook}
\label{section: Summary and Outlook}
This paper considers the corner points of the capacity region of a two-user Gaussian
interference channel (GIC). The operational meaning of the corner points is a study of
the situation where one user sends its information at the single-user capacity
(in the absence of interference), and the other user transmits its data at the
largest rate for which reliable communication is possible at the two non-cooperating
receivers. The approach used in this work for the study of the corner points relies
on some existing outer bounds on the capacity region of a two-user GIC.

In contrast to strong, mixed or one-sided GICs, the two corner points of the capacity region
of a weak GIC have not been determined yet. This paper is focused
on the latter model that refers to a two-user GIC in standard form whose cross-link gains
are positive and below~1.
Theorem~\ref{theorem: bounds on the corner points of a GIC with weak interference}
provides rigorous bounds on the corner points of the capacity region, whose tightness
is especially remarkable at high SNR and INR.

The sum-rate of a GIC with either strong, mixed or one-sided interference is
attained at one of the corner points of the capacity region, and this corner
point is known exactly (see \cite{Han81}, \cite{Motahari09}, \cite{Sason04},
\cite{Sato81} and \cite{Kramer09}).
This is in contrast to a weak GIC whose sum-rate is not
attained at any of the corner points of its capacity region. This motivates
the study in
Section~\ref{section: On the Sub-Optimality of the Corner Points for a Gaussian IC with Weak Interference}
which introduces and analyzes the {\em excess rate for the sum-rate w.r.t.
the corner points}. This measure, denoted by $\Delta$, is defined to be the
gap between the sum-rate and the maximal total rate obtained by the two corner
points of the capacity region. Simple upper and lower bounds on $\Delta$ are
derived in
Section~\ref{section: On the Sub-Optimality of the Corner Points for a Gaussian IC with Weak Interference},
which are expressed in closed form, and the asymptotic characterization of these
bounds is analyzed. In the asymptotic case where the channel is interference limited
(i.e., $P \rightarrow \infty$) and symmetric, the corresponding upper and lower bounds
on $\Delta$ differ by at most 1~bit per channel use (irrespectively of the value of the
cross-link gain~$a$); in this case, both asymptotic bounds on $\Delta$ scale like
$\frac{1}{2} \, \log \left(\frac{1}{a}\right)$ for small $a$.

Analogously to the study of the generalized degrees of freedom (GDOF), an asymptotic
characterization of $\Delta$ is provided in this paper. More explicitly, under the
setting where the SNR and INR scalings are coupled such that
$\frac{\log(\text{SNR)}}{\log(\text{INR})} = \alpha$ for an arbitrary non-negative
$\alpha$, the exact asymptotic characterization of $\Delta$ is provided in
Theorem~\ref{theorem:closed-form expression for delta}.
Interestingly, the upper and lower bounds on $\Delta$ are demonstrated to be
{\em asymptotically tight for the whole range of this scaling} (see
Theorem~\ref{theorem: asymptotic tightness of the bounds on the excess-rate}).

For high SNR, the non-monotonicity of $\Delta$ as a function
of the cross-link gain follows from the asymptotic analysis, and
it is shown to be a stronger result than the non-monotonicity of
the sum-rate in \cite[Section~3]{EtkinTseWang08}.

Improved upper and lower bounds on $\Delta$ are introduced for
finite SNR and INR, and numerical results of these bounds are exemplified.
The numerical results in Section~\ref{subsection:numerical results}
verify the effectiveness of the approximations for high SNR that follow
from the asymptotic analysis of $\Delta$.

This paper supports in general
Conjecture~\ref{conjecture: Costa's conjecture for the corner points}
whose interpretation is that if one user transmits at its single-user
capacity, then the other user should decrease its rate such that both
decoders can reliably decode its message.

A recent work by Bustin {\em et al.} studied the corner points via the connection between
the minimum mean square error and mutual information \cite{Bustin_ISIT14},
providing another support (endorsement) to Costa's conjecture from a different perspective.

We list in the following some directions for further research that are
currently pursued by the author:
\begin{enumerate}
\item A possible tightening of the bound in \eqref{eq:maximal R1 for a two-user mixed Gaussian IC}
for a mixed GIC is of interest.
It is motivated by the fact that the upper bound for the corresponding corner point
is above the one in Conjecture~\ref{conjecture: Costa's conjecture for the corner points}.

\item The unknown corner point of a weak one-sided GIC satisfies the bounds
in Proposition~\ref{proposition: corner points for a one-sided Gaussian IC with weak interference};
it is given by $(R_1, C_2)$ where the gap between the upper and lower bounds on $R_1$ in
\eqref{eq: R1 for second corner point of Z-IC} is large for small values of $a$. An improvement
of these bounds is of interest (see the last paragraph in
Section~\ref{subsection: On Conjecture 1 for the one-sided Gaussian IC}).

\item A possible extension of this work to the class of semi-deterministic
interference channels in \cite{TelatarTse07}, which includes the two-user GICs
and the deterministic interference channels in \cite{ElGamalCosta82}.
\end{enumerate}

\subsection*{Acknowledgment}
I am grateful to Max H. M. Costa for recent stimulating discussions, and for personal
communications on this problem about~13 years ago. Feedback from Ronit Bustin, Max H. M.
Costa, Gerhard Kramer, Shlomo Shamai and Emre Telatar is acknowledged.
Emre Telatar is gratefully acknowledged for an interesting discussion.

\end{document}